\newcommand{\newmid}{\ \middle|\ }
\def\sthat{\ \ \mbox{such that}\ \ }
\newtheorem{theorem}{Theorem}
\newtheorem{lemma}[theorem]{Lemma}
\def\bkE{{\rm I\kern-.17em E}}
\def\bk1{{\rm 1\kern-.17em l}}
\def\bkD{{\rm I\kern-.17em D}}
\def\bkR{{\rm I\kern-.17em R}}
\def\bkP{{\rm I\kern-.17em P}}
\def\bkZ{{\bf{Z}}}
\def\bfone{{\bf 1}}
\def\bkE{{\rm I\kern-.17em E}}
\def\bk1{{\rm 1\kern-.17em l}}
\def\bkD{{\rm I\kern-.17em D}}
\def\bkR{{\rm I\kern-.17em R}}
\def\bkP{{\rm I\kern-.17em P}}
\def\bkZ{{\bf{Z}}}
\def\b12{(\beta_1,\beta_2)}
\newenvironment{proof}[1][]{{\noindent \bf Proof #1: }}{\hfill \qed\vspace{6pt}}
\newcounter{example}
\renewcommand{\theexample}{\arabic{example}}
\newenvironment{examplec}[1][]{\refstepcounter{example}
\par\medskip \noindent%
   \textbf{Example~\theexample. #1} \rmfamily}{\hfill $\square$   \hspace{-4.5pt} \vspace{6pt}}
\newlength{\noteWidth}
\long\def\notes#1{\ifinner
{\tiny #1}
\else
\marginpar{\parbox[t]{\noteWidth}{\raggedright\tiny #1}}
\fi\typeout{#1}}
 \def\notes#1{\typeout{read notes: #1}} %uncomment for final version
\newcommand{\ie}{i.e.\@\xspace} %%% i.e.,
\newcommand{\eg}{e.g.\@\xspace} %%% e.g.,
\newcommand{\Real}{\mathbb{R}}
\def\In{\mathop{\hbox{\it In}\,}}
\def\half  {{\textstyle{1\over 2}}}
\def\spose#1{\hbox to 0pt{#1\hss}}
\def\text #1{\hbox{\quad#1\quad}}
\def\nthinsp{\mskip -2   mu}
\def\superstar{^{\raise 0.5pt\hbox{$\nthinsp *$}}}
\def\SUPERSTAR{^{\raise 0.5pt\hbox{$*$}}}
\def\lamstarT {\lambda^{\raise 0.5pt\hbox{$\nthinsp *$}T}}
\def\Cstar{C\SUPERSTAR}
\def\Rscr{{\cal R}}
\def\Gscr{\mathcal{G}}
\def\aur{\;\textrm{and}\;}
\def\eef{\;\textrm{if}\;}
\def\LCP{{\rm LCP}}
\let\forallnew\forall
\renewcommand{\forall}{\forallnew\ }
\let\forall\forallnew
\def\t{^\top}
		\def\bkE{{\rm I\kern-.17em E}}
		\def\bk1{{\rm 1\kern-.17em l}}
		\def\bkD{{\rm I\kern-.17em D}}
		\def\bkR{{\rm I\kern-.17em R}}
		\def\bkP{{\rm I\kern-.17em P}}
		\def\bkY{{\bf \kern-.17em Y}}
		\def\bkZ{{\bf \kern-.17em Z}}
		\def\bkC{{\bf  \kern-.17em C}}
		\def\bsp{\begin{split}}
		\def\beq{\begin{eqnarray}}
		\def\bal{\begin{align*}}
		\def\bc{\begin{center}}
		\def\be{\begin{enumerate}}
		\def\bi{\begin{itemize}}
		\def\bs{\begin{small}}
		\def\bS{\begin{slide}}
		\def\ec{\end{center}}
		\def\ee{\end{enumerate}}
		\def\ei{\end{itemize}}
		\def\es{\end{small}}
		\def\eS{\end{slide}}
		\def\eeq{\end{eqnarray}}
		\def\eal{\end{align*}}
		\def\esp{\end{split}}
		\def\qed{ \vrule height7.5pt width7.5pt depth0pt}  %width4.17pt depth0pt} 
				\def\viproblem#1#2#3{\fbox
		 {\begin{tabular*}{0.95\textwidth}
			{@{}l@{\extracolsep{\fill}}l@{\extracolsep{6pt}}l@{\extracolsep{\fill}}c@{}}
				#1 &#2 & $#3 $ 
			\end{tabular*}}}
	\def\cp2problem#1#2#3#4{\fbox
		 {\begin{tabular*}{0.9\textwidth}
			{@{}l@{\extracolsep{\fill}}l@{\extracolsep{6pt}}l@{\extracolsep{\fill}}c@{}}
				#1 & & $#4 $ 
			\end{tabular*}}}
\newcommand{\pmat}[1]{\begin{pmatrix} #1 \end{pmatrix}}
		\renewcommand{\emph}[1]{\textbf{#1}}
		\def\bkE{{\rm I\kern-.17em E}}
		\def\bk1{{\rm 1\kern-.17em l}}
		\def\bkD{{\rm I\kern-.17em D}}
		\def\bkR{{\rm I\kern-.17em R}}
		\def\bkP{{\rm I\kern-.17em P}}
		\def\bkZ{{\bf{Z}}}
\newcommand {\beeq}[1]{\begin{equation}\label{#1}}
\newcommand {\eeeq}{\end{equation}}
\newcommand {\bea}{\begin{eqnarray}}
\newcommand {\eea}{\end{eqnarray}}
\def\texitem#1{\par\smallskip\noindent\hangindent 25pt
               \hbox to 25pt {\hss #1 ~}\ignorespaces}
\title{\bf Refinement of the Equilibrium of Public Goods Games over Networks: Efficiency and Effort of Specialized Equilibria}
\author{Parthe Pandit \and Ankur A. Kulkarni\thanks{Parthe and Ankur are with the Systems and Control Engineering group at the  Indian Institute of Technology Bombay, Mumbai, India 400076. They can be contacted at \texttt{\small parthe.pandit@iitb.ac.in} and \texttt{\small kulkarni.ankur@iitb.ac.in}, respectively.} }
\newcommand{\scn}[2]{\mathcal{E}_{#1}(#2)}
\newcommand{\bfe}{\textbf{e}}
\newcommand{\supp}{{\rm supp}}
\newcommand{\SNE}{{\rm SNE}}
\newcommand{\DNE}{{\rm DNE}}
\newcommand{\NE}{{\rm NE}}
\renewcommand{\Cstar}{\mathsf{C}^*}
\newcommand{\EWstar}{\mathsf{E}_w^*}
\newcommand{\CSstar}{\mathsf{C}^{\mathsf{S}*}}
\newcommand{\CDstar}{\mathsf{C}^{\mathsf{D}*}}
\newcommand{\EWSstar}{\mathsf{E}_w^{\mathsf{S}*}}
\newcommand{\WU}{\mathsf{W}_\mathsf{U}}
\newcommand{\WUSstar}{\mathsf{W}_\mathsf{U}^{\mathsf{S}*}}
\newcommand{\WUDstar}{\mathsf{W}_\mathsf{U}^{\mathsf{D}*}}
\newcommand{\solves}{{\rm\ solves\ }}
\begin{document}
\date{}

\maketitle
\section*{Abstract}

Recently Bramoulle and Kranton~\cite{bramoulle2007public} presented a model for the provision of public goods over a network and showed the existence of a class of Nash equilibria called specialized equilibria wherein some agents exert maximum effort while other agents free ride. We examine the efficiency, effort and cost of specialized equilibria in comparison to other equilibria. Our main results show that the welfare of a particular specialized equilibrium approaches the maximum welfare  amongst all equilibria as the concavity of the benefit function tends to unity. For forest networks a similar result also holds as the concavity approaches zero. Moreover, without any such concavity conditions, there exists for any network  a specialized equilibrium that requires the maximum weighted effort amongst all equilibria. When the network is a forest, a specialized equilibrium also incurs the minimum total cost amongst all equilibria. For well-covered forest networks we show that all welfare maximizing equilibria are specialized and all equilibria incur the same total cost.  Thus we argue that specialized equilibria may be considered as a refinement of the equilibrium of the public goods game. We show several results on the structure and efficiency of equilibria that highlight the role of dependants in the network. 

\section{Introduction}

Recently, Bramoulle and Kranton \cite{bramoulle2007public} introduced a model for studying the incentives of agents for the provision of non-excludable public goods in the presence of a network amongst the agents. In their model, agents benefit from their own effort and also from the collective effort of their neighbours in the network, according to a monotone concave benefit function. Since effort is costly, agents choose their effort levels  based on the efforts by their neighbours. In fact, some agents may choose to free ride, \ie, exert no effort, if the cumulative effort of their neighbours is such that marginal cost of their effort is higher than its marginal benefit. At a Nash equilibrium \cite{nash50equilibrium} of the resulting game, no agent has an incentive to unilaterally deviate from its effort level. Such a model, as the authors describe, is suitable to study the incentives for companies to invest in innovation and research, in the presence of a network of companies.

Bramoulle and Kranton \cite{bramoulle2007public} considered three classes of equilibria. 
In a \textit{specialized equilibrium}, some agents exert maximum effort~(effort they would exert in the absence of a network) and all other agents free ride.
Equilibria where all agents exert a positive effort are called \textit{distributed equilibria} and equilibria which are neither specialized nor distributed are referred to as \textit{hybrid equilibria}. Bramoulle and Kranton showed that there may not exist an efficient equilibrium, \ie, a profile of efforts in equilibrium may not correspond to that which maximizes social welfare. Hence it is relevant to compare only equilibrium profiles based on their welfare and ask which equilibria yield maximum welfare. 

Recall that for a class of games $\mathcal{G}$ the Nash equilibrium may be thought of as a set-valued function that prescribes a set of strategy profiles, $\NE(G)$, for each game in $G\in \mathcal{G}$. A refinement of the Nash equilibrium is another set-valued function which prescribes for each $G\in \mathcal{G}$ a \textit{subset} of strategy profiles, $\Rscr(G) \subseteq \NE(G)$, with the property that, for any $G\in \mathcal{G}$, $\Rscr(G)$ is nonempty if $\NE(G)$ is nonempty, and that there is some $G\in \Gscr$ such that $\Rscr(G) \neq \NE(G)$. Refined equilibria have additional properties specified by the mapping $\Rscr$ and may be regarded as being more attractive for consideration as a solution concept.  Bramoulle and Kranton~\cite{bramoulle2007public} showed that specialized equilibria always exist, and under certain conditions, they have the property of stability under best response dynamics and hence may be considered as a refinement of the Nash equilibrium of the  public goods game.
However, it is not clear how these equilibria rank under the criterion of maximum equilibrium welfare and  whether they lead to maximum total equilibrium effort or minimum total equilibrium cost. More generally, it is of interest to understand how the nature of the network affects the structure and efficiency of equilibria of a public goods game.

The present paper is born out of the observation that answers to these questions  are within reach from our previous work in graph theory \cite{pandit2016linear}. 
It was shown in \cite{bramoulle2007public} that specialized equilibria are in a one-to-one correspondence with maximal independent sets of the network. 
While in  \cite{pandit2016linear}, we provided new characterizations of the cardinality of the \textit{largest} and the \textit{smallest} maximal independent set in a graph.
Building on this characterization, in this paper, we show that specialized equilibria may be considered as a refinement of the equilibrium of a public goods game under certain broad assumptions and criteria. 

Our first result shows that under certain assumptions on the concavity of the benefit function, there exist specialized equilibria which attain the maximum welfare amongst all equilibria. Specifically, there is a particular specialized equilibrium with the property that as the concavity approaches unity, the welfare under this equilibrium comes arbitrarily close to the maximum equilibrium welfare. If the graph is a forest, a similar result holds even as the concavity approaches zero. Moreover, for a class of networks called \textit{well-covered forests} \cite{plummer1993well}, we show that for any benefit function, all welfare maximizing equilibria are necessarily specialized. A well-covered forest is a graph without cycles and isolated vertices wherein every vertex is either adjacent to exactly one other vertex, or has exactly one neighbour that possesses this property. 

Next, considering the total weighted \textit{effort} as the criterion for comparing equilibria, we show that for any benefit function, there exist specialized equilibria that maximize the total weighted effort amongst all equilibria. Hence specialized equilibria may be considered as a refinement of the equilibrium of a public goods game by the criterion of total weighted effort, and also by the criterion of welfare under certain assumptions on the concavity of the benefit function.  Surprisingly, analogous results do not hold for the criterion of minimum total cost, or minimum total effort  -- in general hybrid or distributed equilibria may lead to the least total cost. In fact, in regular networks  distributed equilibria form a refinement of the equilibrium under the criterion of minimum total cost. However, if the network is a forest, we find that there is a specialized equilibrium that attains the minimum total equilibrium cost. Once again, well-covered forest networks have an interesting property -- \textit{all} equilibria on such networks require the \textit{same} total effort. 

Additionally,  we derive results relating the nature of equilibria and their efficiency to the structure of the underlying network of agents. We have found that the absence of cycles (\ie, forest networks) and the presence of dependants, \ie, agents having only one neighbour, is an important characteristic of the network in this regard. These results may have sociological interpretations and may be of independent interest.

The following section introduces the model and formally states the main results of this paper.

\subsection{Model, terminology and main results}

Let the graph $G=(V,E)$  be a network with agents represented by vertices $V=\{1,\hdots,n\}$,  and with links $E$ representing pairwise connectivity between the agents, be it geographical, economic or social \cite{bramoulle2007public}. We first recall some terminology pertaining to graphs. Two vertices $i, j \in V$ are said to be \textit{adjacent} if there exists an link $(i, j) \in E$ between them. Adjacent agents are also called \textit{neighbours}. An independent set of a graph is a set of pairwise nonadjacent vertices. An independent set is said to be \textit{maximal} if it is not a subset of a larger independent set. An independent set with the largest number of vertices is called a \textit{maximum} independent set. The cardinality of a maximum independent set is called the \textit{independence number} and is denoted by $\alpha(G)$. It is easy to see that a maximum independent set is maximal, but the converse is not true. A generalization of the independence number is the $w$-weighted independence number for a vector\footnote{All vectors are column vectors unless mentioned otherwise.}  of weights $w=(w_1,\hdots,w_n)\t$,  $w_i \geq 0$ for all $i \in V$. Define the weight of a set $S \subseteq V$ as $\sum_{i\in S}w_i$. The $w$-weighted independence number $\alpha_w(G)$ is the maximum weight among the weights of all independent sets, and the independent set with the maximum weight is called the $w$-weighted maximum independent set. Clearly $\alpha_{\bfe}(G)$ is the independence number of $G$, where $\bfe := (1,1,\ldots,1)\t$ is the vector of all 1's.

A path between vertices $i$ and $j$ is a collection of distinct links $(y_1,z_1),\hdots,(y_k,z_k)$ such that $y_1=i,z_k=j$ and $z_t=y_{t+1}$ for all $1\leq t< k.$ 
Recall that a graph is said to be \textit{connected} if there is a path between every pair vertices. A path that starts and ends at the same vertex is called a cycle. A forest is a graph without cycles. A connected forest is called a tree.

We now come back to the model for the game. Let every agent $i \in V$ exert an effort $x_i \geq 0$ with a constant marginal cost $c$. We call the vector $x = (x_1, x_2, \ldots, x_n) \t$ as the \textit{profile of efforts}. Let $b: \Real \rightarrow \Real$ be a differentiable concave monotone benefit function, i.e., $b'(y) > 0$, $b''(y) < 0$ for all $y \in \Real,\ y \geq 0$. Moreover let  $b'(e^*) = c$, for some $e^* > 0$. In the absence of the network, an agent benefits $b(x_i)$ on exerting an effort $x_i$, which costs $c x_i$, whereby the utility of agent $i$ is $U_i^{\rm solo} =  b(x_i) - c x_i$. This utility is maximum for a unique effort level $e^*$, since $b$ is an increasing concave function, which by definition, satisfies $\frac{\partial U_i^{\rm solo}}{\partial x_i}(e^*) = 0.$

Due to presence of the network, however, an agent $i$ benefits from the collective efforts exerted by itself and its neighbours $N_i := \{j \mid (i,j) \in E\}$. Hence the benefit of agent $i$ is, $$b\left(x_i + \sum_{j \in N_i} x_j\right)=b(\scn{i}{x}),$$ where, 
\begin{equation}
\scn{i}{x}:=x_i + \sum_{j \in N_i}x_j, \label{eq:effortdef} 
\end{equation}
is the \textit{effort of the closed neighbourhood} of agent $i$. It is the cumulative effort from which agent $i$ can benefit. 
Hence the utility of agent $i$ is \begin{equation}U_i(x) = b(\scn{i}{x})-cx_i.\label{eq:utility}\end{equation}
A profile of efforts $x^*$ is a \textit{Nash equilibrium} \cite{nash50equilibrium} if no agent has an incentive to deviate from it unilaterally, i.e., $$U_i(x^*_i,x^*_{(-i)}) \geq U_i(x_i,x^*_{(-i)}),\quad {\rm for\ all}\ x_i \geq 0,$$ where $x^*_{(-i)} = (x^*_1,\ldots,x^*_{i-1},x^*_{i+1},\ldots,x^*_n) \t$. For a public goods game over a network $G$, we denote by $\NE(G)$ the set of Nash equilibrium effort profiles.

It was shown in \cite{bramoulle2007public}, that for every Nash equilibrium $x$, we have $0 \leq x_i \leq e^*$. 
An agent $i$ is called a \textit{free rider} if $x_i = 0$ and a \textit{specialist} if $x_i = e^*$. 
An equilibrium $x^*$ is said to be \textit{specialized} if $x^*_i=0$ or $x^*_i =e^*$ for all agents $i$. $x^*$ is said to be \textit{distributed} if $x_i^* > 0$ for all agents $i$. 
We denote the set of specialized equilibria by $\SNE(G)$ and the set of distributed equilibria by $\DNE(G)$. Bramoulle and Kranton~\cite{bramoulle2007public} also showed that in any specialized equilibrium, the agents exerting effort $e^*$ form a maximal independent set of the graph. Conversely, every maximal independent set corresponds to a specialized equilibrium in which the agents in the set exert effort $e^*$ and all other agents free ride.  

Among all equilibria of a public goods game, some equilibria may be more efficient than others. 
Carrying forward the discussion by Bramoulle and Kranton, we use the utilitarian definition of welfare to compare efficiency of effort profiles. The \textit{utilitarian} welfare, $\WU(x)$, is the sum of utilities of each agent at the profile of efforts $x=(x_1,x_2,\ldots,x_n)\t$, \begin{equation}\WU(x) := \sum_{i \in V}U_i(x) = \sum_{i \in V}b(\scn{i}{x}) - c\ \bfe \t x. \label{eq:welfare function}\end{equation}
An effort profile $x$ is said to be efficient if $\WU(x) \geq \WU(y)$, for all other effort profiles $y$. It was shown in \cite{bramoulle2007public} that Nash equilibria are in general not efficient.

We denote the \textit{maximum equilibrium welfare} by $\WU^*$. Similarly, for a vector of weights $w\in \Real^n$, $\EWstar$ denotes the {\it maximum $w$-weighted equilibrium effort} and $\Cstar$ the {\it minimum equilibrium cost}. Correspondingly $\WUSstar$, $\EWSstar$ $\CSstar$ are these optimum quantities if only specialized equilibria are considered. Formally,
\begin{align}
\WU^* &= \max \{\WU(x) \mid x \in \NE(G)\}, & \WUSstar &= \max \{\WU(x) \mid x \in \SNE(G)\},\label{eq:maximize utilitarian welfare}\\
\EWstar &= \max \{ w \t x \mid x \in \NE(G)\}, & \EWSstar &= \max \{ w \t x \mid x \in \SNE(G)\},\label{eq:maximize effort}\\
\Cstar &= c \cdot \min \{\bfe \t x \mid x \in \NE(G)\}, & \CSstar &= c \cdot \min \{\bfe \t x \mid x \in \SNE(G)\}.\label{eq:minimize cost}
\end{align}
The set of Nash equilibria which yield the maximum equilibrium welfare $\WU^*$ are called the \textit{welfare maximizing equilibria}, equilibria maximizing the $w$-weighted effort are called {\it effort maximizing equilibria} and equilibria minimizing the cost are called {\it cost minimizing equilibria}.

Clearly, from equations (\ref{eq:maximize utilitarian welfare} - \ref{eq:minimize cost}), we have that,
\begin{equation}\label{eq:basic NE SNE inequalities}
\WU^* \geq \WUSstar,\quad \EWstar \geq \EWSstar \quad \aur \quad \Cstar \leq \CSstar.
\end{equation}
The problem of showing that specialized equilibria are a refinement insofar as maximum welfare is concerned amounts to showing the equality $\WU^* = \WUSstar$. Similarly, to show that specialized equilibria are a refinement with respect to maximum total weighted effort and minimum total cost we need to show that $\EWstar = \EWSstar$ and $\Cstar = \CSstar$, respectively.

Notice that $\scn{i}{\cdot}$ is a linear function of $x$ and hence $b(\scn{i}{\cdot})$ is a concave function, whereby the objective functions in (\ref{eq:maximize utilitarian welfare} - \ref{eq:minimize cost}) are either concave or linear. However the feasible region $\NE(G)$ is non-convex and non-polyhedral and $\SNE(G)$ is an integer lattice (this follows from Theorem \ref{lem:NE and LCP} later). Hence, showing that specialized equilibria are a refinement under any of the criteria above is a hard problem. Importantly, the non-polyhedrality of $\NE(G)$ implies that these claims cannot be arrived at using an argument based on extreme points or linear programming.

Let $$\sigma_b := \frac{b(n e^*) - b(e^*)}{c e ^* (n - 1)},$$ denote the \textit{concavity} of the benefit function $b$ as defined by \cite{bramoulle2007public}. It is the normalized slope of the ``secant'' between $e^*$ and $ne^*$ for the function $b$. Since $b$ is a strictly increasing function, $\sigma_b > 0$, whereas since $b$ is strictly concave, the slope of the secant is less than the slope of the tangent at $e^*$, whereby $c \sigma_b < b'(e^*)=c$, \ie, $\sigma_b < 1$. 

In our analysis we consider two extreme cases, namely, as $\sigma_b$ approaches unity and as $\sigma_b$ approaches zero. While we vary $\sigma_b,$ we assume that $b'(e^*) = c$ continues to hold and $b(e^*)$ remains unchanged with $\sigma_b$.  The two cases are depicted in Figure \ref{fig:concavity extremes}. As $\sigma_b \rightarrow 1$, the slope of the secant tends to the slope of $b$ at $e^*$. Since the function lies between the tangent and the secant, the function is \textit{near-linear} between $e^*$ and $ne^*$ as $\sigma_b \rightarrow 1$. On the other hand, as $\sigma_b \rightarrow 0$, the benefit function \textit{plateaus} beyond $e^*$. 

\begin{figure}
\center
\includegraphics[width=0.6\textwidth]{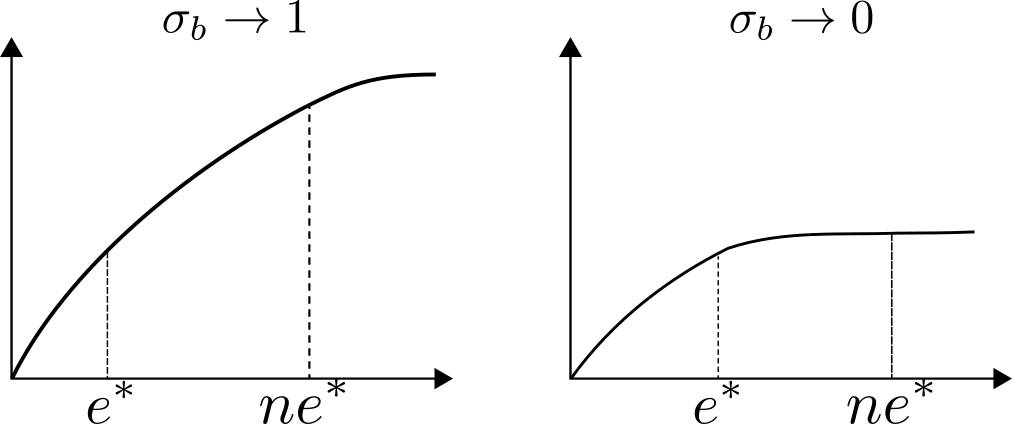}
\caption{ \label{fig:concavity extremes} This figure (not to scale) describes the two extreme cases of concavity. Observe that as $\sigma_b \rightarrow 1$, the benefit function is nearly linear in the interval $[e^*,ne^*]$ and having slope $b'(e^*)$, whereas as $\sigma_b \rightarrow 0$, the benefit function plateaus. Note however that as $\sigma_b$ is varied, $b(e^*)$ and $b'(e^*)$ remains fixed whereby there is no scaling and hence equilibria remain unchanged.}
\end{figure}

An important consequence of holding $b(e^*)$ and $b'(e^*)$ fixed is that it ensures that as $\sigma_b$ changes, the Nash equilibria of the game do not change (see Lemma~\ref{lem:NE and LCP}). This allows one to study the welfare of particular equilibria  in comparison with $\WU^*$ with varying $\sigma_b$.
Bramoulle and Kranton showed that as $\sigma_b \rightarrow 1$ the welfare function approaches the total $d$-weighted effort (plus a constant), where $d$ is a vector of degrees\footnote{The degree of a vertex in a graph is the number of neighbours of the vertex.} of agents. We show that as $\sigma_b \rightarrow 0$ the welfare function equals  the \textit{negative} of the total cost (plus a constant).  As a consequence, when $\sigma_b \rightarrow 1$, the problem of equilibrium  welfare maximization resembles the maximization of the total $d$-weighted equilibrium effort while as $\sigma_b \rightarrow 0$, it resembles the total equilibrium cost minimization.

Our first main result in this paper concerns these extreme regimes of $\sigma_b$.

\begin{theorem}
\label{thm:main}{\bf (Welfare of specialized equilibria)}\\
For a public goods game over a network without isolated agents,
\begin{enumerate}[label=(\alph*)]
\item \label{thm:main1} Let $d=(d_1,\hdots,d_n)\t$ be the vector of degrees of agents, \ie, $d_i=|N_i|$ for all $i\in V.$ Let $S^*$ be a maximum $d$-weighted independent set in the network and let 
$x^*$ be the specialized equilibrium with agents in $S^*$ as specialists and the rest as free riders. Then, as $\sigma_b \rightarrow 1$, the welfare of $x^*$ 
approaches the maximum equilibrium welfare, \ie, $$\lim_{\sigma_b \rightarrow 1}\WU^* = \lim_{\sigma_b \rightarrow 1} \WU(x^*)=n\big(b(e^*)-ce^*\big) + ce^*\alpha_d(G).$$
Consequently, $\lim_{\sigma_b \rightarrow 1} \WU^*=\lim_{\sigma_b \rightarrow 1} \WUSstar$.
\item \label{thm:main2} Suppose the network is a forest. Let $S'$ be the smallest maximal independent set in the network and let 
$x'$ be the specialized equilibrium with agents in $S'$ as specialists and the rest as free riders. Then as $\sigma_b \rightarrow 0$, the welfare of $x'$ 
approaches the maximum equilibrium welfare, 
\ie, $$\lim_{\sigma_b \rightarrow 0}\WU^* = \lim_{\sigma_b \rightarrow 0}\WU(x')=nb(e^*)-ce^*\beta(G),$$
where $\beta(G)$ is the size of $S'.$ 
 Hence, for forests, $\lim_{\sigma_b \rightarrow 0} \WU^*=\lim_{\sigma_b \rightarrow 0} \WUSstar$. 
\item \label{thm:main3} The welfare of any distributed equilibrium (if it exists) is independent of $\sigma_b$. Moreover, $$\WUDstar \geq \lim_{\sigma_b \rightarrow 0}\WUSstar,$$ where, $$\WUDstar =\inf\{\WU(x) \mid x \in \DNE(G)\}, $$ is the least welfare attained by a distributed equilibrium.

\end{enumerate}
\end{theorem}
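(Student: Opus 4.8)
The plan is to handle all three parts through a single mechanism: replace the benefit term $\sum_i b(\scn{i}{x})$ by a linear surrogate that is accurate in the relevant regime of $\sigma_b$, and then reduce the welfare comparison to a purely combinatorial optimization over $\NE(G)$. I will use throughout that (by Lemma~\ref{lem:NE and LCP} and \cite{bramoulle2007public}) $\NE(G)$ does not depend on $\sigma_b$, that every $x\in\NE(G)$ obeys $0\le x_i\le e^*$ and $\scn{i}{x}\ge e^*$ for all $i$, with $\scn{i}{x}=e^*$ whenever $x_i>0$, and hence $\scn{i}{x}\in[e^*,ne^*]$. Concavity of $b$ gives, for $y\in[e^*,ne^*]$, the two-sided bound $b(e^*)+c\sigma_b(y-e^*)\le b(y)\le b(e^*)+c(y-e^*)$, the upper bound being the tangent at $e^*$ (slope $b'(e^*)=c$) and the lower the secant to $ne^*$ (slope $c\sigma_b$); this pins down $b(\scn{i}{x})$ uniformly over $\NE(G)$ as $\sigma_b\to1$ or $\sigma_b\to0$.

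For \ref{thm:main1}, writing $b(\scn{i}{x})=b(e^*)+c(\scn{i}{x}-e^*)-\epsilon_i(x)$ with $0\le\epsilon_i(x)\le c(1-\sigma_b)(n-1)e^*$, and using $\sum_i\scn{i}{x}=\bfe\t x+d\t x$, the welfare collapses to
\[
\WU(x)=n\big(b(e^*)-ce^*\big)+c\,d\t x-\sum_i\epsilon_i(x).
\]
Since $\sum_i\epsilon_i(x)\to0$ uniformly in $x$ as $\sigma_b\to1$, $\WU^*\to n(b(e^*)-ce^*)+c\max_{x\in\NE(G)}d\t x$, whereas $\WU(x^*)\to n(b(e^*)-ce^*)+ce^*\alpha_d(G)$ because the specialists of $x^*$ form the maximum $d$-weighted independent set $S^*$ (maximal since no agent is isolated, so $d_i>0$), giving $d\t x^*=e^*\alpha_d(G)$. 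These two limits agree precisely by the effort-maximization identity $\max_{x\in\NE(G)}d\t x=e^*\alpha_d(G)$, which I would derive from the weighted-independence-number characterization of \cite{pandit2016linear}; the sandwich $\WU(x^*)\le\WUSstar\le\WU^*$ then forces $\lim_{\sigma_b\to1}\WU^*=\lim_{\sigma_b\to1}\WUSstar=n(b(e^*)-ce^*)+ce^*\alpha_d(G)$.

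For \ref{thm:main2}, as $\sigma_b\to0$ the spread $b(ne^*)-b(e^*)=c\sigma_b(n-1)e^*\to0$, so by monotonicity $b(y)\to b(e^*)$ uniformly on $[e^*,ne^*]$ and therefore $\WU(x)\to nb(e^*)-c\,\bfe\t x$ uniformly over $\NE(G)$. Hence $\WU^*\to nb(e^*)-c\min_{x\in\NE(G)}\bfe\t x$ and, minimizing the specialist count over maximal independent sets, $\WUSstar\to nb(e^*)-ce^*\beta(G)$, attained along $x'$ (whose specialists are the smallest maximal independent set $S'$). For a forest I would invoke the cost-minimization identity $\min_{x\in\NE(G)}\bfe\t x=e^*\beta(G)$, again a consequence of the smallest-maximal-independent-set results of \cite{pandit2016linear}; this makes the two limits coincide and proves \ref{thm:main2}.

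For \ref{thm:main3}, a distributed equilibrium has $x_i>0$ for all $i$, so $\scn{i}{x}=e^*$ for every $i$ and thus $\WU(x)=nb(e^*)-c\,\bfe\t x$ \emph{exactly}; as $\NE(G)$ and $b(e^*)$ are fixed, this is independent of $\sigma_b$. For the inequality, set $z=x/e^*$, so $z\ge0$ and, by linearity of $\scn{i}{\cdot}$, $\scn{i}{z}=1$ for all $i$; letting $S'$ be a smallest maximal independent set (size $\beta(G)$, hence dominating),
\[
\beta(G)=\sum_{i\in S'}\scn{i}{z}=\sum_{j\in V}z_j\,\big|S'\cap(\{j\}\cup N_j)\big|\ge\sum_{j\in V}z_j=\bfe\t z,
\]
because $|S'\cap(\{j\}\cup N_j)|\ge1$ by domination. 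Thus $\bfe\t x\le e^*\beta(G)$ for every distributed equilibrium, whence $\WUDstar=nb(e^*)-c\sup_{\DNE(G)}\bfe\t x\ge nb(e^*)-ce^*\beta(G)=\lim_{\sigma_b\to0}\WUSstar$. The genuinely hard steps are the two combinatorial identities invoked in \ref{thm:main1} and \ref{thm:main2}: because $\NE(G)$ is non-convex and non-polyhedral, neither $\max_{x\in\NE(G)}d\t x=e^*\alpha_d(G)$ nor $\min_{x\in\NE(G)}\bfe\t x=e^*\beta(G)$ (forests) can be read off from extreme points or a linear program, and it is precisely here that the largest/smallest weighted maximal-independent-set characterizations of \cite{pandit2016linear} are essential; the asymptotic reductions and all of \ref{thm:main3} are elementary by comparison.
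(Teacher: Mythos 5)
Your proof is correct and reaches the same limits, but the analytic half is organized differently from the paper's. The paper (Theorem~\ref{lem:linear bounds}) sandwiches each $b(\scn{j}{x})$ between a per-agent secant through $(e^*,b(e^*))$ and $(d_je^*,b(d_je^*))$ and a tangent at $d_je^*$, producing $\sigma_b$-dependent weight vectors $l$ and $u$ with $cl\t x \leq \WU(x)-nb(e^*)+ce^*\bfe\t\sigma \leq cu\t x + ce^*d\t(\sigma-\sigma')$, and then needs a continuity-of-value-function argument (Lemma~\ref{lem:continuity of theta}, via Hogan) to conclude $\theta_l,\theta_u\to\theta_d$ or $\theta_{-\bfe}$. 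You instead use a single crude two-sided bound valid uniformly on $[e^*,ne^*]$ -- tangent at $e^*$ above, secant to $ne^*$ below -- which makes the error term $\sum_i\epsilon_i(x)\leq nc(1-\sigma_b)(n-1)e^*$ (resp.\ $b(y)-b(e^*)\leq c\sigma_b(n-1)e^*$) vanish uniformly over the compact, $\sigma_b$-independent set $\NE(G)$; since $|\max f-\max g|\leq\max|f-g|$, the value functions converge with no stability lemma needed. This buys a genuinely shorter route through the analysis, at the cost of looser finite-$\sigma_b$ bounds (the paper's per-agent version gives sharper non-asymptotic estimates of $\WU^*$). The combinatorial core is identical in both: $\max\{d\t x \mid x\in\NE(G)\}=e^*\alpha_d(G)$ and, for forests, $\min\{\bfe\t x\mid x\in\NE(G)\}=e^*\beta(G)$ via Theorems~\ref{lem:NE and LCP} and \ref{thm:old results}\ref{thm:wTx},\ref{thm:beta equality}; and your part~\ref{thm:main3} double-counting over a dominating maximal independent set is the same argument the paper routes through Theorem~\ref{thm:total effort}\ref{thm:total cost}. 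One point to make explicit if you write this up: you should note that a maximum $d$-weighted independent set is maximal precisely because $d_i\geq 1$ for all $i$ (no isolated agents), so that $e^*\bfone_{S^*}$ is in fact a specialized equilibrium -- you flag this in passing, and it is essential.
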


The assumption regarding the absence of isolated agents does not cause loss of generality, and has been made here for the sake of simplicity. Isolated agents $i$ derive benefits from their own efforts only and hence exert the same effort $e^*$ in any equilibrium.   Hence their utility at an equilibrium $x$ is always $U_i^{\rm solo}(x) = b(e^*) - ce^*$ which is independent of $\sigma_b$ (since $b(e^*)$ and $b'(e^*)$ are fixed). 

Notice that Theorem~\ref{thm:main}\ref{thm:main1} is a general result pertaining to any network. Equivalently, it says that for any network $G$ and $\epsilon>0$, there exists a threshold $\sigma_H$  such that for all benefit functions with fixed $b(e^*)$ and $b'(e^*)=c$ and $\sigma_b \geq \sigma_H$, 
\[0\leq \WU^*-\WU(x^*) \leq \epsilon.\]
 Theorem~\ref{thm:main}\ref{thm:main2} pertains specifically to forests (Theorem~\ref{thm:main}\ref{thm:main1} applies to forests too). Here we have that for any forest $G$ and $\epsilon>0$, there exists a threshold $\sigma_L$ such that for all benefit functions with fixed $b(e^*)$ and $b'(e^*)=c$ and $\sigma_b \leq \sigma_L$,
 \[0 \leq \WU^* - \WU(x') \leq \epsilon.\] 
 In Theorem~\ref{thm:main}\ref{thm:main3}, notice that $\WUDstar$ is the \textit{least} welfare over distributed equilibria. 
The above theorem is a consequence of a more general result regarding the ranking of equilibria when they are compared based on their total weighted effort and total cost which are given by Theorem \ref{thm:total effort} below.

It is worth noting that in both Theorems \ref{thm:main} and \ref{thm:total effort}, we not only show that there exists a specialized equilibrium, but also point to a particular equilibrium which approaches the maximum welfare amongst all equilibria, under the given conditions on the concavity or the structure of the network.

\begin{theorem}\label{thm:total effort} {\bf (Total effort and cost of specialized equilibria)} \\For a public goods game over a network, 
\begin{enumerate}[label = (\alph*)]
\item \label{thm:total effort_max} Let $w=(w_1,\hdots,w_n)\t, $ $w_i \geq 0$ for all $i \in V$, be a set of weights and $S^*$ be a maximum $w$-weighted independent set. Then the specialized equilibrium in which agents in $S^*$ exert effort $e^*$ and the rest are free riders is an effort maximizing equilibrium. Consequently, $\EWstar = \EWSstar$. 
\item \label{thm:total cost_forest} Suppose the network is a forest, and let $S'$ be the smallest maximal independent set in the network. Then the specialized equilibrium in which agents in $S'$ exert effort $e^*$ and the rest are free riders is a cost minimizing equilibrium. Consequently, for forest networks, $\Cstar = \CSstar$.
\item \label{thm:total cost} If there exists a distributed equilibrium $x=(x_1,\hdots,x_n)\t$, the cost incurred in it, $\sum_i c x_i$, is at most as much as that of any specialized equilibrium, whereby $\Cstar \leq \CDstar \leq \CSstar$, where $\CDstar$ the minimum cost attained by a distributed equilibrium.
\item \label{thm:total cost_regular} If the network is regular\footnote{A network is said to be regular if all agents have the same number of neighbours.}, a distributed equilibrium in which all agents exert equal effort minimizes the equilibrium cost, \ie, $\Cstar = \CDstar$.
\end{enumerate}
\end{theorem}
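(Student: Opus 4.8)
The plan is to treat all four parts as instances of optimizing a linear functional ($w\t x$ or $\bfe\t x$) over $\NE(G)$, and in each case to match the value achieved by the relevant specialized equilibrium against a bound valid for \emph{every} equilibrium. Throughout I will use the complementarity characterization (Lemma~\ref{lem:NE and LCP}): at any $x\in\NE(G)$ one has $\scn{i}{x}=e^*$ whenever $x_i>0$ and $\scn{i}{x}\ge e^*$ whenever $x_i=0$, together with $0\le x_i\le e^*$. Since the specialized equilibrium attached to a maximal independent set $S$ has $\bfe\t x=e^*|S|$ and $w\t x=e^*\,w(S)$ (writing $w(A):=\sum_{i\in A}w_i$), each statement reduces to one inequality comparing an arbitrary equilibrium to a combinatorial quantity.

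For part~\ref{thm:total effort_max} I would first record a weighted domination property: for a maximum $w$-weighted independent set $S^*$ of any graph, every $j\notin S^*$ satisfies $w(S^*\cap N_j)\ge w_j$, since otherwise $(S^*\setminus N_j)\cup\{j\}$ would be a heavier independent set. Next I reduce to distributed equilibria by passing to the support $P=\supp(x)$: for $i\in P$ the efforts outside $P$ vanish, so $x|_P$ is a distributed equilibrium on $G[P]$, and $\alpha_w(G[P])\le\alpha_w(G)$ because independent sets of $G[P]$ remain independent in $G$. For a distributed equilibrium on a graph $H$, summing the equalities $\scn{i}{x}=e^*$ with weights $w_i$ over a maximum $w$-weighted independent set $S^*$ of $H$ gives $e^*\alpha_w(H)=\sum_j c_j x_j$, where $c_j:=\sum_{i\in S^*\cap(\{j\}\cup N_j)}w_i$ equals $w_j$ for $j\in S^*$ and $w(S^*\cap N_j)$ for $j\notin S^*$; the domination property yields $c_j\ge w_j$ for all $j$, so $x\ge 0$ forces $w\t x\le e^*\alpha_w(H)$. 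Applying this with $H=G[P]$ gives $w\t x\le e^*\alpha_w(G)$, which the specialized equilibrium on $S^*$ attains, proving $\EWstar=\EWSstar$.

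Parts~\ref{thm:total cost} and~\ref{thm:total cost_regular} fall out of the same summation idea. For~\ref{thm:total cost}, given a distributed equilibrium $x$ and any maximal independent set $S$, summing $\scn{i}{x}=e^*$ over $i\in S$ gives $e^*|S|=\sum_j |S\cap(\{j\}\cup N_j)|\,x_j\ge\bfe\t x$, since maximality makes $S$ dominating so each closed neighbourhood meets $S$; minimizing over $S$ yields $\CDstar\le\CSstar$, and $\Cstar\le\CDstar$ is immediate. For~\ref{thm:total cost_regular}, if every degree equals $r$ then summing $\scn{i}{x}\ge e^*$ over all $i$ gives $(1+r)\,\bfe\t x\ge ne^*$, while the uniform profile $x_i=e^*/(1+r)$ is a distributed equilibrium meeting this with equality, so $\Cstar=\CDstar$.

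The substantial case is part~\ref{thm:total cost_forest}, which I would prove by a charging induction over the components of the forest: it suffices to construct, from an arbitrary equilibrium $x$ on a tree $T$, one maximal independent set $M$ with $e^*|M|\le\bfe\t x$, since then $\beta(G)\le|M|$ forces every equilibrium to cost at least $ce^*\beta(G)=\CSstar$, which with $\Cstar\le\CSstar$ gives equality. Rooting $T$ and taking a deepest leaf-parent $p$ with leaf-children $L$ and parent $g$, the complementarity conditions leave only three configurations: $f_p=e^*$ (every leaf free-rides, and since the specialist $p$ is adjacent to $g$ necessarily $f_g=0$); $f_p=0$ (every leaf is a specialist); and $0<f_p<e^*$ (then $\scn{p}{x}=e^*$ forces $|L|=1$ and $f_g=0$). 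In each case I place an independent dominator into $M$—$p$ in the first and third, the leaves in the second—charge it $e^*$ of effort from a region ($\{p\}$, the leaves, or $\{p\}\cup L$) that is disjoint across steps, delete its closed neighbourhood, and recurse. The crux, and the step I expect to be most delicate, is verifying that the restriction to the surviving subgraph is again an equilibrium: a deletion lowers a survivor's closed-neighbourhood effort only through a removed vertex that still has a surviving neighbour, namely $g$ (when $N[p]$ is deleted) or $p$ (when $g$ survives), and the relevant effort $f_g$ or $f_p$ vanishes in exactly those cases, so no survivor loses effort and the induction hypothesis applies. Disjointness of the charged regions gives $e^*|M|\le\bfe\t x$, and $M$ is maximal because every deleted vertex is dominated by a member added at the same step.
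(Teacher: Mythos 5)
Your proof is correct, but for parts \ref{thm:total effort_max}, \ref{thm:total cost_forest} and \ref{thm:total cost_regular} it takes a genuinely different route from the paper. The paper first identifies $\frac{1}{e^*}\NE(G)$ with the solution set of $\LCP(G)$ (Theorem \ref{lem:NE and LCP}) and then reads off (a), (b) and (d) from results imported from \cite{pandit2016linear}: Theorem \ref{thm:old results}\ref{thm:wTx} gives $\max\, w\t x = e^*\alpha_w(G)$ directly, Theorem \ref{thm:old results}\ref{thm:beta equality} gives the forest minimum $e^*\beta(G)$, and Lemma \ref{lem:old lemmas}\ref{lem:regular} handles the regular case; only part (c) is proved from scratch, by essentially the same summation over a maximal independent set that you use. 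You instead re-derive the combinatorial inputs: your weighted-domination inequality $w(S^*\cap N_j)\ge w_j$ plus the summation of $\scn{i}{x}=e^*$ over $S^*$ restricted to the support-induced subgraph recovers Theorem \ref{thm:old results}\ref{thm:wTx} for equilibria; your charging induction on a deepest leaf-parent recovers the content of Theorem \ref{thm:old results}\ref{thm:beta equality} (indeed it reconstructs a maximal independent set inside $\supp(x)$, cf.\ Lemma \ref{lem:old lemmas}\ref{lem:old lemmas forest2}, and your three cases with $x_g=0$ or $x_p=0$ at the deleted boundary do make the restriction to the survivors an equilibrium, which is exactly the delicate point); and your double-counting $(1+r)\,\bfe\t x\ge ne^*$ replaces the cited lemma for regular graphs. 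What the paper's approach buys is brevity given the prior work; what yours buys is a self-contained argument, at the cost of re-proving the hardest ingredient (the forest case). The only shared loose end is that a maximum $w$-weighted independent set need not be maximal when some $w_i=0$, so in (a) one should take $S^*$ to be a \emph{maximal} independent set of maximum weight for $e^*\bfone_{S^*}$ to be an equilibrium---a convention the paper also adopts implicitly.
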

Notice that the above theorem does not require any assumptions about the benefit function. 

We show that a particular class of networks called \textit{well-covered forests} happen to have even stronger and more interesting properties. A network $G$ is said to be \textit{well-covered} if all maximal independent sets of $G$ have the same size. 
For a network, if an agent $i$ is adjacent to exactly one other agent $j$, we say that $i$ is a \textit{dependant} of $j$, and $j$ is the \textit{guardian} of $i$. 
It was shown by Plummer \cite{plummer1993well} that a forest network (without isolated agents) is well-covered if and only if all agents are either dependants or guardians, and every guardian agent has exactly one dependant. 
Our last main result is the following theorem regarding welfare maximizing equilibria in well-covered forest networks. Notice that this result holds without any assumptions on the benefit function or $\sigma_b$.

\begin{theorem}\label{thm:well-covered}{\bf (Efficient specialized equilibria in well-covered networks)}\\ For a public goods game over a well-covered forest network,
\begin{enumerate}[label=(\alph*)]
\item\label{thm:well-covered welfare} An equilibrium yields the maximum equilibrium welfare only if it is a specialized equilibrium.
\item\label{thm:well-covered cost} If the network does not have isolated agents, the cost incurred by any equilibrium profile of efforts is $\half ce^* n$.
\end{enumerate}

\end{theorem}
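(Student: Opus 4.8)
The two parts are linked, so I would establish \ref{thm:well-covered cost} first and use it to drive \ref{thm:well-covered welfare}. The structural input is Plummer's characterization quoted above: in a well-covered forest without isolated agents every vertex is either a dependant (degree one) or a guardian, and each guardian has exactly one dependant. Consequently the pendant edges joining each dependant to its guardian form a perfect matching that partitions $V$ into $n/2$ pairs $\{g,d\}$, where $d$ is a dependant whose only neighbour is the guardian $g$. (Isolated agents, if present, always exert $e^*$ and so never affect whether a profile is specialized, so I would simply set them aside for \ref{thm:well-covered welfare}.)

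For \ref{thm:well-covered cost}, fix any equilibrium $x$ and a pair $\{g,d\}$. Since $d$ has the single neighbour $g$, its closed-neighbourhood effort is $\scn{d}{x}=x_d+x_g$, and its best-response condition maximizes $b(x_d+x_g)-cx_d$ over $x_d\ge 0$. Because $b'(e^*)=c$ and $b'$ is strictly decreasing, the unconstrained optimum sits at $x_d+x_g=e^*$; as $x_g\le e^*$ in every equilibrium the nonnegativity constraint is inactive, so $x_d+x_g=e^*$ exactly. Summing this identity over the $n/2$ pairs, which partition $V$, yields $\bfe\t x=\tfrac{n}{2}e^*$, so the cost $c\,\bfe\t x=\half c e^* n$ is identical across all equilibria, proving \ref{thm:well-covered cost}.

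I would then leverage this for \ref{thm:well-covered welfare}. Since $\WU(x)=\sum_i b(\scn{i}{x})-c\,\bfe\t x$ and the cost term is now constant over $\NE(G)$, maximizing welfare is equivalent to maximizing $B(x):=\sum_i b(\scn{i}{x})$. The pairing identity already gives $\scn{d}{x}=e^*$ for every dependant, so each dependant contributes the constant $b(e^*)$, and substituting $x_d=e^*-x_g$ gives $\scn{g}{x}=e^*+S_g$, where $S_g$ is the total effort of the guardian-neighbours of $g$; thus only the guardian terms $b(e^*+S_g)$ vary. My plan is an exchange argument. Take a non-specialized equilibrium and locate a guardian $g$ with $x_g\in(0,e^*)$ (a non-specialized profile always forces one, since an interior dependant effort $x_d=e^*-x_g$ makes its guardian interior as well). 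The first-order condition at the interior guardian gives $\scn{g}{x}=e^*$, hence $S_g=0$, so every guardian-neighbour $j$ of $g$ has $x_j=0$ and, having $x_{d_j}=e^*$ from its own dependant, is strictly content to free ride. I would then raise $x_g$ to $e^*$ and drop $x_d$ to $0$: the best responses of $g$ and of $d$ are immediate, and each guardian-neighbour $j$ still free rides because increasing $x_g$ only increases the neighbouring effort it already found sufficient. This perturbation keeps us in $\NE(G)$, leaves all terms of $B$ unchanged except the $b(\scn{j}{x})$ for guardian-neighbours $j$ of $g$, and strictly increases those (as $b$ is increasing) whenever $g$ has a guardian-neighbour — contradicting welfare-maximality and forcing the profile to be specialized.

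The main obstacle is the degenerate case of the exchange step: if the interior guardian $g$ has \emph{no} guardian-neighbour, the perturbation leaves $B$ unchanged and the welfare genuinely ties between the interior and specialized profiles. This occurs precisely on a pendant-edge ($K_2$) component, whose two degree-one endpoints admit any split $x_g+x_d=e^*$ as an equilibrium of identical welfare. Making the ``only if'' precise therefore hinges on handling such components — either excluding them, exploiting the freedom in designating which endpoint is the guardian, or restricting the claim to guardians possessing a guardian-neighbour; away from them the strict-improvement argument goes through verbatim and pins every welfare-maximizing equilibrium to a specialized one. Verifying equilibrium-preservation of the perturbation and isolating this degenerate component are the steps I expect to demand the most care.
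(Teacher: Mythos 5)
Your proof of part \ref{thm:well-covered cost} is correct and takes a genuinely more elementary route than the paper. The paper derives cost-constancy from the LCP machinery: $\tfrac{1}{e^*}x$ solves $\LCP(G)$, well-coveredness gives $\alpha(G)=\beta(G)$, and Theorem \ref{thm:old results}\ref{thm:wTx} and \ref{thm:beta equality} pin $\bfe\t x$ between $e^*\beta(G)$ and $e^*\alpha(G)$. Your argument instead uses only Lemma \ref{lem:basic results of LCP(G)}\ref{lem:bounds on C_j} (a dependant $d$ has $d_d=1$, hence $\scn{d}{x}=x_d+x_g=e^*$ in every equilibrium) and sums over the perfect matching of pendant edges guaranteed by Plummer's characterization. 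This is shorter, self-contained, and makes the combinatorial reason for the constant cost transparent. For part \ref{thm:well-covered welfare}, your exchange (shift the interior guardian $g$ up to $e^*$, its dependant down to $0$, and check that the guardian-neighbours of $g$, who free ride with specialist dependants of their own, remain best-responding) is essentially the same perturbation the paper uses inside Theorem \ref{thm:dependant_efficiency}; routing it through the cost-constancy of part \ref{thm:well-covered cost} so that only $\sum_i b(\scn{i}{x})$ matters is a clean reorganization, and the strict increase of $b(\scn{j}{\cdot})$ at each guardian-neighbour $j$ is exactly the paper's strict inequality.

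The degenerate case you flag is not a defect of your argument but a genuine counterexample to the statement as written. An isolated link is a well-covered forest (its single pendant edge is a perfect matching), and on the component $\{g,d\}$ with $N_g=\{d\}$ every split $x_g+x_d=e^*$ is an equilibrium with welfare $2b(e^*)-ce^*$; in particular $(e^*/2,e^*/2)$ is a non-specialized welfare-maximizing equilibrium, so the ``only if'' in part \ref{thm:well-covered welfare} fails. The paper's own proof inherits this hole: the proof of Theorem \ref{thm:dependant_efficiency} opens with ``Let agent $i$ be the only dependant of agent $j$ such that they are \emph{not} co-dependants,'' and its strict inequality is the sum $\sum_{k\in N_j\setminus\{i\}}\bigl(b(\scn{k}{y})-b(\scn{k}{x})\bigr)$, which is empty (hence zero, not positive) precisely when $i$ and $j$ are co-dependants; yet the proof of Theorem \ref{thm:well-covered}\ref{thm:well-covered welfare} applies that result to \emph{all} guardian--dependant pairs of the well-covered forest, including isolated links. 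So you should not feel obliged to close this case --- the correct move is to add the hypothesis that the network contains no isolated links (equivalently, every guardian has a guardian-neighbour), or to weaken the conclusion on such components to ``the set of welfare maximizers contains a specialized equilibrium.'' Away from isolated links your strict-improvement step goes through verbatim and your proof is complete.
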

All our results are comprehensively summarized in Table \ref{tab:summary}. 

Additionally, we obtain results that clarify the structure of equilibrium profiles in such games. In an equilibrium, if two neighbouring agents $i$ and $j$ exert positive efforts $x_i$ and $x_j$ such that $x_i + x_j = e^*$, then we call such agents  \textit{co-specialists}, and the link $(i,j)$ joining them is called a \textit{co-specialist link}.
We show in Section \ref{sec:structure} that, in any network at equilibrium, dependants are either specialists, or free riders, or that they form a co-specialist pair with their guardian. 
Interestingly, for networks with at least one dependant, we find that any Nash equilibrium has at least one free rider, \ie, there exists no distributed equilibrium. 
We further show that if the network is a forest, then at equilibrium, agents are either specialists, co-specialists or free riders only. 

\begin{table}
\begin{tabular}{|p{.21\textwidth}|p{.23\textwidth}|p{.23\textwidth}|p{.23\textwidth}|}
\hline
\multirow{2}{*}{} & \multicolumn{3}{c|}{\large\textit{If network is $\hdots$}}\\ \cline{2-4}\textit{\large Then $\hdots$} & \multicolumn{1}{c|}{Any network} & \multicolumn{1}{c|}{Forest}& \multicolumn{1}{c|}{Well-covered forest} \\ \hline 
Set of welfare maximizing equilibria & contains a specialized equilibrium  as $\sigma_b \rightarrow 1$ & contains a specialized equilibrium as $\sigma_b \rightarrow 1$ and as $\sigma_b \rightarrow 0$ &  contains \textit{only} specialized equilibria \\ \hline
Set of effort maximizing equilibria & contains a specialized equilibrium & contains a specialized equilibrium & contains a specialized equilibrium\\ \hline
Set of cost minimizing equilibria & can't say$^*$ & contains a specialized equilibrium & any equilibrium incurs the same cost\\ \hline
\end{tabular}

\caption{\label{tab:summary} Summary of the main results in this paper -- Theorems \ref{thm:main}, \ref{thm:total effort} and \ref{thm:well-covered}. \newline 
$\ ^*$ If the network is regular then there always exists a distributed equilibrium where all agents exert equal effort. This equilibrium minimizes the total cost.}
\end{table}

\subsection{Organization of the paper}

The rest of the paper is organised as follows. Section 2 provides some background on graphs and main results from our previous work in graph theory. In Section 3, we characterize the Nash equilibria in a public goods game over a network and discuss the effect of the network on the structure of the equilibrium. In particular, we discuss results about networks which have \textit{dependant} and \textit{guardian} agents. The proof of Theorem 2 is given in Section 4 which shows that specialized equilibria are a refinement under the criterion of maximum weighted equilibrium effort for all networks, while they are a refinement under the criterion of minimum equilibrium cost if the network is a forest. Section 5 proves Theorem 1 and shows that the specialized equilibrium is a refinement with respect to maximum equilibrium welfare under conditions on the concavity of the benefit function. In section 6, the result in Theorem 3 is proved which gives conditions on equilibria in a well-covered forest network for any benefit function. Finally, section 7 concludes the paper. In every case where the specialized equilibria form a refinement of the Nash equilibrium, we give the specialized equilibrium which is optimal.

\section{Background}
\subsection{Independent and dominating sets}
%An undirected graph $G$ is given by the pair $(V, E)$, where $V$ is a finite set of vertices and $E$ is a set of \textit{edges} which are unordered 2-tuples of vertices. Two vertices $i, j \in V$ are said to be neighbours if there exists an edge $(i, j) \in E$ between them. In the study of public goods provision over a network of agents, the agents in the network are represented by   the vertices and the edges indicate connectivity between agents, either geographical, social or any other links. 
The adjacency matrix $A=[a_{ij}]$ of a graph $G$ is a $n \times n$ 0-1 matrix such that $a_{ij} = 1$ if and only if $(i,j) \in E$. 
The \textit{characteristic vector} of a set $S \subseteq V$ is a vector $\bfone_S$ with $|V|$ components such that $(\bfone_{S})_i = 1$ if $i \in S$ and 0 otherwise. $S$ is an independent set of $G$ if and only if $\bfone_S\t A \bfone_S=0.$
% and it completely characterizes the graph. %An independent set of $G$ is a set of pairwise nonadjacent vertices. An independent set is said to be \textit{maximal} if it is not a subset of a larger independent set. A maximal independent set can be constructed using a greedy algorithm (See \cite[Prop. 1]{bramoulle2007public}). 
A maximal independent set is always a \textit{dominating set}, i.e., a set such that any vertex not in the set has at least one neighbour in the set. The minimum cardinality among maximal independent sets is called the \textit{independent domination number} and is denoted by $\beta(G)$. %Computing $\alpha(G)$, $\alpha_w(G)$ and $\beta(G)$ are hard combinatorial problems in general. However for a large class of graphs called perfect graphs \cite{grotschel1986relaxations} and claw-free  graphs \cite{minty1980maximal}, there exist efficient algorithms for computing these quantities.
A graph $G$ is said to be \textit{well-covered} if all maximal independent sets are of the same size, \ie, $\alpha(G) = \beta(G)$.

Closely related to independent sets is the concept of \textit{matchings}. A matching is a set of links such that no two links have a vertex in common. If every vertex has an link incident on it from a matching then the matching is called a \textit{perfect matching}.

Given a set $S \subseteq V$, the subgraph of $G$ induced by $S$ is the graph $G_S = (S,E_S)$, where $E_S = \{(i,j)\mid i,j \in S \aur (i,j) \in E\}$. Clearly $G_V = G$. %The set $N_i : = \{j \in V \mid a_{ij} = 1\}$ is called the neighbourhood of agent $i$. %For a set $S \subseteq V$, $N_S:= \cup_{i \in S}N_i$ is the neighbourhood of set $S$. 
Degree of a vertex, defined $d_i:=|N_i|$, is the number of neighbours of vertex $i$. %Vertices with degree 1 are called \textit{leaf vertices}, which abstract the concept of non-trivial dependants. An edge connecting a leaf vertex to its neighbour is called a \textit{pendant line}.
%Observe that $\bfe \t \bfone_S = |S|$, where $\bfe$ is the vector of 1s. If $x$ is a non-negative vector, the product $\bfe \t x$ is the $\ell_1$ norm of $x$.
For a vector $x \in \Real^{|V|}$, we define the \textit{support} of $x$ as
$${\rm supp}(x) := \{i \in V \mid x_i \neq 0\}.$$
% and we define the \textit{sum of closed neighbourhoods} for a vertex $i$ as 
%$$\scn{i}{x}:= x_i + \sum_{j\in N_i}x_j.$$
In the context of public goods provision over a network, for a profile of efforts $x$ in a network $G$, the set of agents $\supp(x)$ are called the \textit{supporting agents}, \ie, agents who are not free riders. The graph induced by them, \ie, $G_{\supp(x)}$ is called the \textit{network of supporting agent}s. %Observe that according to the model given by \cite{bramoulle2007public}, $\scn{i}{x}$ is the cumulative effort from which agent $i$ can benefit.

\subsection{The linear complementarity problem and maximal independent sets}
\label{sec:previous work}
In this section we recall our results from~\cite{pandit2016linear}. For this purpose we first recall some concepts pertaining to linear complementarity problems; a definitive resource for this is~\cite{cottle92linear}. Given a matrix $M \in \Real^{n \times n}$ and a vector $q \in \Real^n$,  
the linear complementarity problem \LCP($M,q$) is the following problem,
\begin{align*}
\label{eq:LCP}
\text{Find} x=(x_1,x_2\cdots x_n) \in \Real^n \quad \sthat \quad
& (1) \quad x \geq 0,\\
& (2) \quad y := Mx + q \geq 0, \tag*{LCP($M,q$)} \\
& (3) \quad y \t x = 0. 
\end{align*}
LCPs generalize Nash Equilibria in bimatrix games, quadratic programs and several other problems.
Consider a simultaneous move game with two players having loss matrices $A, B \in \Real^{m\times n}$. A mixed strategy Nash equilibrium~\cite{nash50equilibrium} is a pair of vectors $(x^*,y^*) \in \Delta_n \times \Delta_m$ such that, 
	\[(x^*)\t A y^* \leq x \t A y^*, \quad \forall \;x \in \Delta_n, \qquad (x^*)\t B y^* \leq (x^*)\t B y,  \quad \forall \;y \in \Delta_m,\]
	where $\Delta_k :=\{x \in \Real^k \mid \sum_i x_i =1, x\geq 0\}.$  Under certain technical assumptions~(see, \eg, \cite[p.\ 6]{cottle92linear}), it can be shown that if $(x^*,y^*)$ is a Nash equilibrium, then the concatenated vector $\big[{x'}\t,\ {y'}\t\big]\t$ solves $\LCP(M,q)$, where, 
\[x'=x^*/(x^*)\t By^* \qquad y'=y^*/(x^*)\t A y^*,\]	
	  and, 
\[M = \pmat{0 & A \\ B\t &0}, \qquad q = -\bfe,\]
where $\bfe$ denotes a vector of ones in $\Real^{m+n}.$ Conversely, if $\big[{x'}\t\ {y'}\t\big]\t$ solves  $\LCP(M,q)$ then $x^*= x'/(\sum_i x'_i)$ and $y^*=y'/\sum_j y'_j$ is a Nash equilibrium. In  general, equilibria of certain games involving coupled constraints~\cite{kulkarni09refinement} also reduce to LCPs. For more applications, we refer the reader to~\cite{cottle92linear}.
%LCPs may have unique, finitely many, infinitely many or no solutions. In the cases where it has a solution, we say that the LCP is solvable. Results on LCPs concern questions such as existence, uniqueness, and boundedness of solutions, and their stability to changes in the vector $q$, in addition to computation. A typical line of attack has been to characterize classes of matrices $M$ and vectors $q$ for which the $\LCP(M,q)$ has the desired properties. A vast variety of matrix classes have been analyzed; we refer the reader to~\cite{cottle92linear} more on this topic.

In our previous work \cite{pandit2016linear}, an \LCP~based continuous optimization formulation was provided for finding the independence number and independent domination number of a graph. Given a graph $G=(V,E)$ with adjacency matrix $A$, consider the problem $\LCP(G)$, 
\[ \viproblem{$\LCP(G)$}{Find $ x\in \Real^n $ such that }{x\geq 0,\ (A+I)x \geq \bfe, \ x \t \big((A+I)x - \bfe\big)= 0,} \]
where $I$ is the $|V| \times |V|$ identity matrix, and $\bfe$ is the vector in $\Real^{|V|}$ with all 1's.
%\subsection{LCP and maximal independent sets}
%We showed that %for a graph $G=(V,E)$ with adjacency matrix $A$, 
%the characteristic vectors of all maximal independent sets of $G$ are the only integer solutions of $\LCP(A+I,-\bfe)$. %We refer to this as $\LCP(G)$ for the rest of the paper.
The results from our previous work \cite{pandit2016linear} that are relevant here are summarized as the following theorem.

%Our results showed much more. We showed that  the maximum $\ell_1$ norm over all solutions of $\LCP(G)$ is in fact $\alpha(G)$, i.e.,
%\begin{equation} \label{eq:alpha equality}
%\alpha(G) = \max \{\bfe \t x \mid x \ {\rm  solves }\ \LCP(G) \},
%\end{equation}
%and the characteristic vector of a maximum independent set maximizes the above problem. We also showed a generalization of \eqref{eq:alpha equality}. 
%Moreover, for a non-negative vector $w \in \Real^n$, we showed that the weighted independence number $\alpha_w(G)$ is the maximum weighted $\ell_1$ norm of the solutions of $\LCP(G)$, \ie,
%\begin{equation}
%\label{eq:max weighted l1 norm}
%\alpha_w(G) = \max \{w \t x \mid x \ {\rm  solves }\ \LCP(G)\},
%\end{equation}
%and the characteristic vector of a maximum $w$-weighted independent set is a maximizer.

\begin{theorem} {\bf (LCP characterization for maximal independent sets)}\label{thm:old results}
Consider a graph $G = (V,E)$
\begin{enumerate}[label=(\alph*)]
\item \label{thm:characteristic vectors} \cite[Lemma 4]{pandit2016linear} A binary vector is a solution to  $\LCP(G)$ if and only if it is the characteristic vector of a maximal independent set.
\item \label{thm:wTx} \cite[Theorem 1]{pandit2016linear} For a non-negative vector $w \in \Real^n$,
\[\max \{w \t x \mid x \ {\rm  solves }\ \LCP(G)\}  = \alpha_w(G),\]
and the characteristic vector of the $w$-weighted maximum independent set is a maximizer.
\item \label{thm:beta} \cite[Example 1]{pandit2016linear} For a graph $G$,
\[\min \{\bfe \t x \mid x \ {\rm  solves }\ \LCP(G)\}  \leq \beta(G),\]
and the inequality is in general strict.
\item \label{thm:beta equality} \cite[Theorem 2]{pandit2016linear} If the graph $G$ is a forest,
\[\min \{\bfe \t x \mid x \ {\rm  solves }\ \LCP(G)\}  = \beta(G),\] and the characteristic vector of the smallest maximal independent set is a minimizer.
\end{enumerate}
\end{theorem}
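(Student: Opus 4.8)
The entire theorem rests on part \ref{thm:characteristic vectors}, which I would prove first by reading off each of the three defining conditions of $\LCP(G)$ combinatorially for a binary vector $x=\bfone_S$. Since $((A+I)x)_i=\scn{i}{\bfone_S}=[\,i\in S\,]+|N_i\cap S|$, the feasibility condition $(A+I)x\ge\bfe$ says exactly that every vertex lies in $S$ or has a neighbour in $S$, i.e.\ that $S$ is a dominating set; the complementarity condition $x_i\big(\scn{i}{x}-1\big)=0$ forces $\scn{i}{\bfone_S}=1$ for each $i\in S$, which (as $i\in S$ already contributes $1$) means $i$ has no neighbour in $S$, i.e.\ that $S$ is independent. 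Thus $\bfone_S$ solves $\LCP(G)$ iff $S$ is an independent dominating set, and such sets are precisely the maximal independent sets. This direction is routine and supplies the binary ``vertices'' that will achieve the combinatorial optima.

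For part \ref{thm:wTx} the lower bound $\max\ge\alpha_w(G)$ is immediate: take a $w$-weighted maximum independent set, which (since $w\ge 0$) may be enlarged to a maximal one without decreasing its weight, and invoke part \ref{thm:characteristic vectors} to see its characteristic vector is a feasible solution of value $\alpha_w(G)$. The substance is the reverse inequality $w\t x\le\alpha_w(G)$ for an arbitrary solution $x$. Here I would restrict attention to $S:=\supp(x)$: complementarity forces the equality system $(A_S+I)x_S=\bfe$ with $x_S>0$, where $A_S$ is the adjacency matrix of the induced subgraph $G_S$ and $w\t x=w_S\t x_S$. It then suffices to bound the weight of a strictly positive solution of this closed-neighbourhood equality system on $G_S$ by $\alpha_w(G_S)\le\alpha_w(G)$ (the last step because independent sets of $G_S$ are independent in $G$ and $w\ge0$). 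For the unweighted case this follows from a clean averaging identity: for any maximal independent set $U$ of $G_S$, summing $\scn{u}{x}=1$ over $u\in U$ and regrouping by $j\in S$ with coefficient $|N[j]\cap U|$ (which is $1$ for $j\in U$ and $\ge1$ otherwise, by independence and domination of $U$) yields $|U|=\sum_{u\in U}\scn{u}{x}\ge\bfe\t x_S$, whence $\bfe\t x\le\beta(G_S)\le\alpha(G)$. Extending this to the weighted bound against $\alpha_w(G_S)$ is the crux discussed below.

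Parts \ref{thm:beta} and \ref{thm:beta equality} concern minimisation of $\bfe\t x$. The inequality $\min\le\beta(G)$ of part \ref{thm:beta} is again immediate from part \ref{thm:characteristic vectors}, since the characteristic vector of a smallest maximal independent set is a solution of value $\beta(G)$; strictness is exhibited by a concrete example such as $C_5$, where $x=\tfrac13\bfe$ solves $\LCP(G)$ and gives $\bfe\t x=\tfrac53<2=\beta(C_5)$. For forests (part \ref{thm:beta equality}) one must prove the matching lower bound $\bfe\t x\ge\beta(G)$ for every solution $x$, which I would establish by induction on $|V|$ exploiting acyclicity: pick a leaf $\ell$ with guardian $v$, analyse the two closed-neighbourhood relations $\scn{\ell}{x}\ge1$ and $\scn{v}{x}\ge1$ (equalities on the support) to charge at least one unit of weight to the pair $\{\ell,v\}$, delete $N[v]$, apply the induction hypothesis to the remaining forest, and close the argument with the deletion recursion $\beta(G)\le 1+\beta(G\setminus N[v])$ that holds for forests.

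The main obstacle in both nontrivial directions is the non-convexity of the solution set of $\LCP(G)$: because it is a finite union of faces rather than a polyhedron, one cannot simply argue that the linear objective is optimised at a $0/1$ extreme point and then quote part \ref{thm:characteristic vectors}. Concretely, the hard steps are (i) upgrading the unweighted averaging inequality to the weighted statement $w_S\t x_S\le\alpha_w(G_S)$ --- I expect this to require either showing the restricted solution $x_S$ lies in the stable-set polytope of $G_S$, or a weighted charging argument replacing the uniform sum over $U$ --- and (ii) the forest induction for part \ref{thm:beta equality}, where acyclicity is precisely what rules out the fractional gap witnessed by $C_5$ and must be harnessed through leaves and the deletion recursion. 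Everything else (part \ref{thm:characteristic vectors}, the two achievability bounds, and the $C_5$ counterexample) is routine once part \ref{thm:characteristic vectors} is in hand.
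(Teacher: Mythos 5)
First, note that the paper itself offers no proof of this theorem: all four parts are imported verbatim from \cite{pandit2016linear} (Lemma~4, Theorem~1, Example~1 and Theorem~2 there), so there is no in-paper argument to compare against and you are supplying proofs the paper deliberately outsources. Within that caveat, your part~\ref{thm:characteristic vectors} is correct and complete (feasibility $\Leftrightarrow$ domination, complementarity on the support $\Leftrightarrow$ independence, and independent dominating sets are exactly the maximal independent sets), the achievability halves of parts~\ref{thm:wTx} and~\ref{thm:beta} are fine, and the $C_5$ witness $x=\tfrac13\bfe$ with $\bfe\t x=\tfrac53<2=\beta(C_5)$ correctly exhibits strictness in part~\ref{thm:beta}.

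Two steps are genuinely incomplete. (i) In part~\ref{thm:wTx} you prove the upper bound $w\t x\le\alpha_w(G)$ only for $w=\bfe$ and explicitly leave the weighted case open. The ``weighted charging argument'' you gesture at does close it and should be written out: on $S=\supp(x)$ repeatedly pick a vertex $u$ of maximum weight among the surviving vertices $S'$, observe $\sum_{j\in N[u]\cap S'}w_jx_j\le w_u\sum_{j\in N[u]\cap S'}x_j\le w_u\,\scn{u}{x}=w_u$ (using complementarity at $u\in\supp(x)$), delete $N[u]\cap S'$, and recurse; the chosen vertices form an independent set whose weight is at least $w\t x$, so $w\t x\le\alpha_w(G_S)\le\alpha_w(G)$. (ii) In part~\ref{thm:beta equality} the induction as written fails: after deleting $N[v]$ the restriction of $x$ need not solve $\LCP(G\setminus N[v])$, because a neighbour $w\neq\ell$ of $v$ with $x_w>0$ may be removed while still being needed to support the closed-neighbourhood constraint of a surviving vertex. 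Concretely, on the path $\ell\!-\!v\!-\!w\!-\!u$ with $x=(1,0,\tfrac12,\tfrac12)$, deleting $N[v]$ leaves the single vertex $u$ with $x_u=\tfrac12<1=\beta(\{u\})$, so the chain $\bfe\t x\ge 1+\bfe\t x_{G\setminus N[v]}\ge 1+\beta(G\setminus N[v])\ge\beta(G)$ breaks at the middle inequality. The repair is a case split on the leaf: if $x_v>0$, complementarity at $\ell$ and at $v$ forces $x_w=0$ for every $w\in N(v)\setminus\{\ell\}$, so all deleted boundary vertices carry zero effort and deleting $N[v]$ is safe; if $x_v=0$, then $x_\ell=1$ and you should instead delete $N[\ell]=\{\ell,v\}$ and use the recursion $\beta(G)\le 1+\beta(G\setminus N[\ell])$. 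With these two repairs your outline becomes a complete proof.
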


Theorem \ref{thm:old results}\ref{thm:characteristic vectors} shows a relation between maximal independent sets of a graph and the solutions of the $\LCP(G)$. The only integral solutions of $\LCP(G)$ are characteristic vectors of maximal independent sets of $G$. Since the $w$-weighted maximum independent set is also maximal, it is a feasible solution to the maximization problem in Theorem \ref{thm:old results}\ref{thm:wTx}. However Theorem~\ref{thm:old results}\ref{thm:wTx} states that it is in fact a maximizer of this problem.

One might expect that analogous to the result in Theorem \ref{thm:old results}\ref{thm:wTx}, the characteristic vector of the smallest maximal independent set is a minimizer of $\bfe \t x$ among all solutions to $\LCP(G)$. However this is not true in general as given by Theorem \ref{thm:old results}\ref{thm:beta}. The gap is shown to be strict even if the graph is bipartite and regular (see \cite[Example 1]{pandit2016linear}).
%\begin{equation} \label{eq:beta inequality}
%\beta(G) \geq \min \{\bfe \t x \mid x \ {\rm  solves }\ \LCP(G) \} .
%\end{equation}

%Unlike the equality in \eqref{eq:max weighted l1 norm} which holds for all graphs and all sets of weights $w_i \geq 0$, the inequality in \eqref{eq:beta inequality} is strict in general even for the unweighted case ().
The equality in Theorem \ref{thm:old results}\ref{thm:beta} is always attained if the graph $G$ is a forest, as indicated by Theorem \ref{thm:old results}\ref{thm:beta equality}.
%\begin{equation} \label{eq:beta equality}
%\beta(G) = \min \{\bfe \t x \mid x \ {\rm  solves }\ \LCP(G) \}, \qquad {\rm \ for\ all\ forest\ graphs\ G}.
%\end{equation}
%and the characteristic vector of the smallest maximal independent set of $G$ is a minimizer. 
This is attributed to the peculiar structure of the solution set of $\LCP(G)$ when $G$ is a forest. %We showed that if $x$ be a solution to $\LCP(G)$, then $G_{supp(x)}$, the subgraph of $G$ induced by $supp(x)$ is a collection of isolated edges and isolated vertices only, as demonstrated in Figure \ref{fig:trees}. 
%A discussion regarding the same is presented later in Lemma \ref{lem:trees} in the context of equilibria of public goods games over a forest network.
Other than the above results, we also state here a few lemmas from our previous work that we use in the present paper. We refer the reader to \cite{pandit2016linear} for detailed proofs to these claims.

\begin{lemma}\label{lem:old lemmas} For a graph $G$,
\begin{enumerate}[label=(\alph*)]
\item \label{lem:regular} \cite[Lemma 8]{pandit2016linear} If $G$ has $n$ vertices and is regular with degree $d$, then \begin{equation}
\label{eq:regular} \min \{\bfe \t x \mid x \ {\rm  solves }\ \LCP(G)\}  = \frac{n}{d+1},
\end{equation} and the vector $\frac{\bfe}{d+1}$ is a minimizer.
\item \label{lem:old lemmas forest2} \cite[Lemma 9]{pandit2016linear} If $G$ is a forest, and $x$ is a solution to $\LCP(G)$, then there exists a maximal independent set $S \subseteq \supp(x)$.
\end{enumerate}
\end{lemma}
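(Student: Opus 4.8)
The plan is to treat the two parts separately, since part \ref{lem:regular} is a short linear-algebraic computation exploiting regularity, while part \ref{lem:old lemmas forest2} requires an inductive argument adapted to the forest structure.

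For part \ref{lem:regular}, I would first exhibit the candidate minimizer. Since $G$ is $d$-regular, every row of $A$ sums to $d$, so $A\bfe = d\bfe$ and hence $(A+I)\tfrac{\bfe}{d+1} = \bfe$. This immediately verifies that $\tfrac{\bfe}{d+1}$ is nonnegative, satisfies $(A+I)x \geq \bfe$ with equality, and meets the complementarity condition $x\t((A+I)x-\bfe)=0$; thus it solves $\LCP(G)$ and attains the objective value $\bfe\t\tfrac{\bfe}{d+1} = \tfrac{n}{d+1}$. To see this is optimal, I would left-multiply the feasibility constraint $(A+I)x \geq \bfe$ of an arbitrary solution by the nonnegative vector $\bfe\t$, giving $\bfe\t(A+I)x \geq n$. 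Since $A$ is symmetric and $G$ is $d$-regular, $\bfe\t(A+I) = (d+1)\bfe\t$, so $(d+1)\,\bfe\t x \geq n$ and hence $\bfe\t x \geq \tfrac{n}{d+1}$ for every solution. This matches the value attained above, proving part \ref{lem:regular}.

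For part \ref{lem:old lemmas forest2}, the plan is strong induction on $|V|$. Two structural facts I would establish first from the LCP conditions: complementarity forces $x_i + \sum_{j\in N_i}x_j = 1$ for every $i \in \supp(x)$, so in particular $0 \le x_i \le 1$ for all $i$; and feasibility forces every $i \notin \supp(x)$ to have a neighbour in $\supp(x)$. For the inductive step, if $G$ has no edges then every vertex satisfies $x_i=1$, and $S=V$ is the maximal independent set contained in $\supp(x)$. Otherwise pick a leaf $\ell$ with neighbour $p$ and do a case analysis on $(x_\ell,x_p)$ using $x_\ell+x_p\ge 1$: (i) if $x_\ell=0$ then $x_p=1$, and complementarity at $p$ forces every neighbour of $p$ to free ride; (ii) if $x_\ell=1$ then complementarity at $\ell$ forces $x_p=0$; (iii) if $0<x_\ell<1$ then $x_p=1-x_\ell>0$, and complementarity at $p$ together with $\ell\in N_p$ forces every other neighbour of $p$ to free ride. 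In cases (i) and (iii) I would add $u=p$ to $S$, and in case (ii) I would add $u=\ell$; in each case $u\in\supp(x)$, and the key point is that every positively-valued vertex inside the closed neighbourhood $N[u]$ has all of its neighbours inside $N[u]$. I would then delete $N[u]$, restrict $x$ to $V\setminus N[u]$, and observe that each surviving vertex $i$ loses only zero-valued neighbours, so $((A+I)x)_i$ is unchanged; hence the restriction still solves $\LCP(G-N[u])$, which is again a forest. By the induction hypothesis there is a maximal independent set $S'\subseteq\supp(x)\cap(V\setminus N[u])$ of $G-N[u]$, and $S:=S'\cup\{u\}$ is independent in $G$ (since $u$ has no surviving neighbour) and dominating (since $u$ dominates $N[u]$ and $S'$ dominates the rest), hence a maximal independent set of $G$ with $S\subseteq\supp(x)$.

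The main obstacle is the bookkeeping in part \ref{lem:old lemmas forest2}: verifying that deleting $N[u]$ leaves a genuine $\LCP(G-N[u])$ solution. This rests entirely on showing, via complementarity at the chosen vertex, that the positively-valued vertices being removed are adjacent to no surviving vertex, so that no feasibility or complementarity condition is disturbed; the leaf structure of a forest is precisely what guarantees that such a vertex $u$ always exists.
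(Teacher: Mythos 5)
This lemma is one the paper itself does not prove: it is imported verbatim from the authors' earlier graph-theory paper (\cite[Lemmas 8 and 9]{pandit2016linear}), and the text explicitly refers the reader there for the proofs. So there is no in-paper argument to compare against; your proposal must stand on its own, and it does. Part (a) is correct and is the natural argument: $A\bfe = d\bfe$ verifies that $\bfe/(d+1)$ solves $\LCP(G)$ with $(A+I)x=\bfe$ exactly (so complementarity is automatic), and left-multiplying the feasibility inequality by $\bfe\t$, using that symmetry plus regularity gives $\bfe\t(A+I)=(d+1)\bfe\t$, yields the matching lower bound $\bfe\t x \geq n/(d+1)$ for every feasible point --- note your bound in fact never uses complementarity, so you prove something slightly stronger (the minimum over the whole feasible polyhedron). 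Part (b) is also correct, and the one place the argument could break --- that deleting $N[u]$ leaves a genuine solution of $\LCP(G-N[u])$ --- is exactly where you put the work: in each of the three cases, complementarity at the chosen vertex ($x_p + \sum_{j\in N_p}x_j = 1$, combined with $x_\ell + x_p = 1$ in case (iii)) forces every positive-valued vertex of $N[u]$ to have all neighbours inside $N[u]$, so survivors lose only zero-valued neighbours and their feasibility and complementarity terms are untouched; the leaf structure of the forest is used precisely in case (iii), where $\ell$ being pendant guarantees its only neighbour $p$ is deleted with it. The final assembly ($S'\cup\{u\}$ independent because $u$'s whole neighbourhood is deleted, dominating because $u$ covers $N[u]$ and $S'$ covers the rest, and $u\in\supp(x)$ in all three cases) is also sound, and your edgeless base case correctly covers isolated vertices and the empty graph. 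In short: a complete, self-contained, elementary proof of a statement the paper only cites.
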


%The proofs of the above lemmas are slightly technical, hence we briefly summarize the ideas used while proving them. While proving Lemma \ref{lem:old lemmas}\ref{lem:regular}, we show that the vector $\frac{\bfe}{d+1}$ is not only feasible for the minimization problem \eqref{eq:regular}, but also for the LP relaxation of \eqref{eq:regular} and the dual problem of the LP relaxation of \eqref{eq:regular}. Lemma \ref{lem:old lemmas}\ref{lem:old lemmas forest2} is via a construction of a maximal independent set of any given solution to the $\LCP(G)$.

We now apply the results in Theorem \ref{thm:old results} and Lemma \ref{lem:old lemmas} to give conditions under which specialized equilibria are a refinement of the Nash equilibrium of the public goods games over networks. We do this by establishing a relation between $\NE(G)$ and the solution set of $\LCP(G)$ and thereby obtaining a handle on the structure of equilibria in these games.

\section{Structure of equilibria} \label{sec:structure}

\subsection{Characterization of equilibria in a public goods game over a network}

In this section, we show a relation between LCPs and equilibria of the public goods game and establish a few properties regarding the equilibria. We first recall the conditions given by Bramoulle and Kranton, on the effort levels of the agents at a Nash equilibrium.

\begin{lemma}[Section 3.1 \cite{bramoulle2007public}]
\label{lem:equilibrium basic conditions}{\bf Conditions on efforts at equilibrium}\\
A profile of efforts $x^* \geq 0$ is an equilibrium of a public goods game in a network $G$ if and only if exactly one of the following is true, \begin{enumerate}[label=(\alph*)]
\item \label{lem:condition1}$\sum_{j \in N_i}x^*_j > e^*$, and $x^*_i = 0$,
\item \label{lem:condition2}$\sum_{j \in N_i}x^*_j \leq e^*$, and $x^*_i = e^* - \sum_{j \in N_i}x_j$.
\end{enumerate}
\end{lemma}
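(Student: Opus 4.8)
The plan is to reduce the (simultaneous) Nash equilibrium condition to a collection of single--agent best--response problems and solve each one by elementary concave optimization over a half--line. Fix an agent $i$ and freeze the efforts $x^*_{-i}$ of all other agents; write $\sigma_i := \sum_{j \in N_i} x^*_j$ for the total effort of $i$'s neighbours. By the very definition of Nash equilibrium, $x^*$ is an equilibrium if and only if, for \emph{every} $i$, the scalar $x^*_i$ maximizes
$$\phi_i(t) := U_i(t, x^*_{-i}) = b\big(t + \sigma_i\big) - c\,t$$
over $t \geq 0$. Since $\scn{i}{\cdot}$ is affine in $t$ and $b$ is strictly concave, $\phi_i$ is strictly concave, so it possesses a unique maximizer on $[0,\infty)$, characterized by the first--order optimality conditions for a concave function on a half--line.

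First I would record that optimality condition. We have $\phi_i'(t) = b'(t + \sigma_i) - c$, and $t^* \geq 0$ maximizes $\phi_i$ if and only if either $t^* > 0$ with $\phi_i'(t^*) = 0$, or $t^* = 0$ with $\phi_i'(0) \leq 0$. Using the standing hypothesis $b'(e^*) = c$ together with the fact that $b'$ is strictly decreasing (because $b'' < 0$), the stationarity equation $b'(t^* + \sigma_i) = c = b'(e^*)$ is equivalent to $t^* + \sigma_i = e^*$, and the corner inequality $b'(\sigma_i) \leq c = b'(e^*)$ is equivalent to $\sigma_i \geq e^*$.

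Next I would split on the value of $\sigma_i$ relative to $e^*$. If $\sigma_i < e^*$, the stationary point $t^* = e^* - \sigma_i > 0$ is feasible and is the maximizer, giving $x^*_i = e^* - \sigma_i > 0$. If $\sigma_i = e^*$, the corner condition holds with equality and $x^*_i = 0 = e^* - \sigma_i$; both of these sub--cases are precisely condition~(b). If $\sigma_i > e^*$, the stationary point lies strictly left of the origin, so $\phi_i$ is strictly decreasing on $[0,\infty)$ and the maximizer is $x^*_i = 0$, which is condition~(a). Equivalently, the best response is always the projection of the unconstrained stationary point onto the feasible half--line, $x^*_i = \max(0,\, e^* - \sigma_i)$. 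Assembling these, $x^*_i$ is a best response if and only if one of (a), (b) holds, and running the quantifier over all $i$ yields the claimed biconditional.

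The only point requiring care is the exclusivity clause, which I would verify last. Conditions (a) and (b) demand $\sigma_i > e^*$ and $\sigma_i \leq e^*$ respectively, and these two possibilities partition the range of $\sigma_i$, so for each agent at most one can hold; combined with the existence argument above, exactly one holds. The delicate boundary value is $\sigma_i = e^*$, where the best response is $x^*_i = 0$: one must confirm that this is absorbed into (b) through the formula $x^*_i = e^* - \sigma_i = 0$ and does \emph{not} trigger (a), which fails because it requires the strict inequality $\sigma_i > e^*$. I expect no substantive obstacle beyond this corner/boundary bookkeeping, since the entire argument rests on the elementary fact that a strictly concave, differentiable function on $[0,\infty)$ has a unique maximizer obtained by clipping its unconstrained stationary point at $0$.
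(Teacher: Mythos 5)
Your proof is correct and follows essentially the same route as the paper, which cites this result from Bramoulle--Kranton and sketches exactly this marginal-benefit-versus-marginal-cost argument: reduce to the single-agent best response, note that $\phi_i(t)=b(t+\sigma_i)-ct$ is strictly concave on $[0,\infty)$, and clip the unique stationary point $e^*-\sigma_i$ at zero using $b'(e^*)=c$ and the strict monotonicity of $b'$. Your explicit treatment of the boundary case $\sigma_i=e^*$ and of the exclusivity of (a) and (b) is careful bookkeeping that the paper leaves implicit, but there is no substantive difference in approach.
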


The above lemma is argued as follows, an agent $i$ has incentive to exert positive effort only if the total effort from which it benefits, $\sum_{j \in N_i}x_j^*$, is at most as much as the effort level for which marginal benefit equals the marginal cost. If however the total effort of the neighbours of $i$ is lesser than this effort level, then $i$ has incentive to exert effort equal to this deficit, but no more.

As a consequence of Lemma \ref{lem:equilibrium basic conditions}, if $x$ is an equilibrium profile then, \begin{equation} \label{eq:less effort in network} 0 \leq x_i \leq e^*, \end{equation} for all $i \in V$, indicating that presence of a network leads to lower effort by agents, as one might expect. We refer to $\frac{1}{e^*}x$ as the \textit{normalized} profile of efforts.

We observe that the conditions given in Lemma \ref{lem:equilibrium basic conditions} resemble the \textit{either-or} nature given by the equations in \eqref{eq:LCP}. We show in the following theorem that the equilibria of a public goods game over a network $G$ are exactly characterized by $\LCP(G)$.

\begin{theorem}
\label{lem:NE and LCP} {\bf (Normalized equilibrium efforts are solutions to \LCP(G))}\\
A profile of efforts $x$ is a Nash equilibrium of the public goods game over the network $G$ if and only if $$\frac{1}{e^*}x\quad {\rm solves}\quad \LCP(G).$$
\end{theorem}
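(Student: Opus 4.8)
The plan is to show the equivalence by matching the equilibrium conditions of Lemma~\ref{lem:equilibrium basic conditions} to the complementarity conditions defining $\LCP(G)$ under the substitution $z := \frac{1}{e^*}x$. First I would recall that $\LCP(G)$ asks for $z \geq 0$ with $(A+I)z \geq \bfe$ and $z\t\big((A+I)z - \bfe\big) = 0$. Observe that the $i$-th component of $(A+I)z$ is exactly $z_i + \sum_{j \in N_i} z_j = \frac{1}{e^*}\scn{i}{x}$, so in terms of the original efforts the three $\LCP$ conditions read: $x_i \geq 0$; $\scn{i}{x} \geq e^*$; and $\sum_i x_i\big(\scn{i}{x} - e^*\big) = 0$ for all $i$. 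The key structural fact is that the complementarity sum vanishes if and only if each summand vanishes, because every term $x_i\big(\scn{i}{x}-e^*\big)$ is nonnegative once $x \geq 0$ and $\scn{i}{x} \geq e^*$ hold; hence complementarity is equivalent to the per-agent condition that for each $i$, either $x_i = 0$ or $\scn{i}{x} = e^*$.

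Next I would prove the forward direction. Suppose $x \in \NE(G)$. By Lemma~\ref{lem:equilibrium basic conditions}, for each $i$ exactly one of cases \ref{lem:condition1} and \ref{lem:condition2} holds. In case \ref{lem:condition1}, $x_i = 0$, so $z_i = 0 \geq 0$; and since $\sum_{j\in N_i} x_j > e^*$ we get $\scn{i}{x} = x_i + \sum_{j\in N_i} x_j > e^*$, giving strict feasibility of the $i$-th inequality, while the complementarity term $z_i(\cdots)$ vanishes because $z_i = 0$. In case \ref{lem:condition2}, $x_i = e^* - \sum_{j\in N_i} x_j \geq 0$, so $\scn{i}{x} = x_i + \sum_{j\in N_i} x_j = e^*$ exactly, which gives equality in the $i$-th inequality and hence a vanishing complementarity term. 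In both cases nonnegativity, feasibility, and the per-agent complementarity condition hold, so $z$ solves $\LCP(G)$.

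For the converse, suppose $z = \frac{1}{e^*}x$ solves $\LCP(G)$. As noted, complementarity forces, for each $i$, either $x_i = 0$ or $\scn{i}{x} = e^*$. If $x_i = 0$, then feasibility gives $\scn{i}{x} = \sum_{j\in N_i} x_j \geq e^*$; I would observe that this yields condition \ref{lem:condition1} (with the caveat about equality discussed below), so $i$ has no incentive to deviate. If $x_i > 0$, then complementarity forces $\scn{i}{x} = e^*$, i.e., $x_i = e^* - \sum_{j\in N_i} x_j$ with $\sum_{j\in N_i} x_j \leq e^*$, which is exactly condition \ref{lem:condition2}. Invoking Lemma~\ref{lem:equilibrium basic conditions} then certifies that $x \in \NE(G)$.

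The main obstacle I anticipate is the boundary case where an agent has $x_i = 0$ and $\sum_{j\in N_i} x_j = e^*$ exactly. The dichotomy in Lemma~\ref{lem:equilibrium basic conditions} places the threshold $\sum_{j\in N_i}x^*_j \leq e^*$ in case \ref{lem:condition2}, which would then require $x_i = e^* - \sum_{j\in N_i}x_j = 0$; this is consistent, so the equality case is in fact covered by case \ref{lem:condition2} with $x_i = 0$, and it simultaneously satisfies the $\LCP$ conditions (feasibility holds with equality and the complementarity term vanishes since $x_i = 0$). Thus the two formulations agree on this knife-edge, and no genuine gap arises; I would simply note explicitly that the "exactly one of" phrasing in Lemma~\ref{lem:equilibrium basic conditions} and the $\LCP$ conditions coincide on this overlap, completing the equivalence.
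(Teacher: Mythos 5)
Your proof is correct and follows essentially the same route as the paper: normalize by $e^*$, use the linearity of $\scn{i}{\cdot}$ to identify the components of $(A+I)x - \bfe$ with $\scn{i}{x}-e^*$, and match the resulting per-agent complementarity conditions against the dichotomy of Lemma~\ref{lem:equilibrium basic conditions}. Your explicit treatment of the knife-edge case $x_i = 0$, $\sum_{j\in N_i}x_j = e^*$ and of the equivalence between the scalar complementarity equation and its per-term version is slightly more careful than the paper's writeup, but the argument is the same.
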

\begin{proof} A vector $x \in \Real^n$ solves $\LCP(G)$ means,
\begin{align}
&x \geq 0,  &&\qquad \Leftrightarrow 
\qquad x_i \geq 0, \quad\forall \;i\in\; V, \label{eq:lcpg1} \\
&(A_G + I)x \geq \bfe, &&\qquad \Leftrightarrow \qquad \scn{i}{x} \geq 1, \quad\forall \;i\in\; V, \label{eq:lcpg2} \\
&y \t \big((A+I)x - \bfe\big) = 0. &&\qquad \Leftrightarrow \qquad x_i (\scn{i}{x}-1)=0, \quad\forall \;i\in\; V. \label{eq:lcpg3} 
\end{align}

Recall that the conditions given in Lemma \ref{lem:equilibrium basic conditions} are both necessary and sufficient for a profile of efforts $x^*$ to be a Nash equilibrium of a public goods game over a network $G$. Writing these conditions differently we get that, $x^*$ is a Nash equilibrium if and only if exactly one of the following is true.
\begin{enumerate}[label=(\alph*)]
\item $x_i^* = 0$ and $\scn{i}{x^*} > e^*$
\item $\scn{i}{x^*} = e^*$ and $x_i^* \geq 0$
\end{enumerate}
Since $\scn{i}{\cdot}$ is a linear function $\scn{i}{\frac{1}{e^*} x^*} = \frac{1}{e^*} \scn{i}{x^*}$. Hence the above conditions are equivalent to saying that if $x^*$ is an equilibrium, then for each $i$, $\frac{1}{e^*}x^*_i \geq 0$ and $\scn{i}{\frac{1}{e^*}x^*} \geq 1$, and at least one of these inequality is tight, i.e.,  $\frac{1}{e^*}x^*\big(\scn{i}{\frac{1}{e^*}x^*} - 1\big) = 0$.
This proves the theorem.
\end{proof}

Thus at equilibrium, the effort of every agent $i$ obeys the conditions given by (EE$(G)$),
%\begin{equation}\label{eq:equilibrium efforts}
% \viproblem{Equilibrium effort $x$}{$x_i\geq 0,\qquad \scn{i}{x} \geq e^*\qquad \aur \qquad x_i = 0 {\rm \ or\ } \scn{i}{x} = e^*$}{}
%\end{equation}
\[\boxed{ {\rm Equilibrium\ Effort } \ \ \qquad\  \qquad  x_i\geq 0,\qquad \scn{i}{x} \geq e^*\qquad \aur \qquad x_i = 0 {\rm \ \ or\ \ } \scn{i}{x} = e^*.\tag{EE(G)}}\label{eq:equilibrium efforts}\]

We now apply our results regarding $\LCP(G)$ to Theorem \ref{lem:NE and LCP} to establish our first result regarding the general structure of equilibria of a public goods game over a network. Recall that if two neighbouring agents $i$ and $j$ exert efforts $x_i>0$ and $x_j>0$ such that $x_i + x_j = e^*$, then such agents are referred to as {co-specialists}, and the link $(i,j)$ joining them is called a {co-specialist link}.

\begin{lemma}
\label{lem:basic results of LCP(G)} In a public goods game over a network $G$, 
\begin{enumerate}[label=(\alph*)]

\item The supporting agents in any equilibrium effort profile form a dominating set of the network.

\item \label{lem:undisturbed}If a free rider leaves the network at equilibrium, then the equilibrium remains undisturbed.

\item \label{lem:specialist IS}In any equilibrium, neighbours of a specialist free ride, whereby the specialists form an independent set of the network.

\item \label{lem:co-specialists matching}In any equilibrium, neighbours of both co-specialists free ride, whereby co-specialist links form a matching of the network.

\item \label{lem:bounds on C_j} If $x$ is an equilibrium profile of efforts and $d_i = |N_i|$ is the degree of the agent $i$ such that $d_i>0$, then
 \begin{equation}x_i \leq e^* \leq \scn{i}{x} \leq d_i e^*.\label{lem:inequalities in efforts}
\end{equation} Equality holds in the second inequality if agent $i$ is not a free rider; further, equality holds in both the first and second inequality only if $i$ is a specialist. The third inequality holds with equality if and only if $i$ is a free rider such that all its neighbours are specialists. If $d_i=0, x_i=\scn{i}{x}=e^*.$
\end{enumerate}
\end{lemma}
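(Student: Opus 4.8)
The plan is to derive all five parts directly from the pointwise equilibrium conditions of Lemma~\ref{lem:equilibrium basic conditions} (equivalently, via Theorem~\ref{lem:NE and LCP}), which I will use throughout in the form: at any equilibrium $x$ every agent $i$ satisfies $x_i \geq 0$, $\scn{i}{x} \geq e^*$, and $x_i = 0$ or $\scn{i}{x} = e^*$, together with the bound $0 \leq x_i \leq e^*$ from \eqref{eq:less effort in network}. No structural facts about $\LCP(G)$ beyond this are needed; each claim is a consequence of these conditions applied vertex by vertex.

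For the dominating-set claim (a), I take any free rider $i$, so $x_i = 0$, and read off $\scn{i}{x} = \sum_{j \in N_i} x_j \geq e^* > 0$, which forces some neighbour to have positive effort; hence every vertex outside $\supp(x)$ has a neighbour in $\supp(x)$, which is exactly domination. (The same computation shows an isolated agent cannot free ride.) For the robustness claim (b), I delete a free rider $i$ and note that, since $x_i = 0$, removing $i$ leaves $\scn{k}{x}$ unchanged for every remaining agent $k$; therefore each remaining agent still satisfies the equilibrium conditions in the induced subgraph, so the restricted profile is an equilibrium there.

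Parts (c) and (d) both exploit that a non-free-riding agent $i$ has $\scn{i}{x} = e^*$, so the efforts in its closed neighbourhood are tightly budgeted. For (c), if $i$ is a specialist and $j \in N_i$, then $\scn{j}{x} \geq x_j + x_i = x_j + e^*$; were $x_j > 0$, the equilibrium conditions would force $\scn{j}{x} = e^*$, contradicting $\scn{j}{x} > e^*$. Thus every neighbour of a specialist free rides, so in particular no two specialists are adjacent and the specialists form an independent set. For (d), if $i,j$ are co-specialists then $x_i > 0$ gives $\scn{i}{x} = e^*$, and since $j \in N_i$ with $x_i + x_j = e^*$, the remaining neighbours of $i$ must contribute exactly zero; hence every neighbour of $i$ other than $j$ free rides (and symmetrically for $j$). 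Consequently $i$'s only positive-effort neighbour is $j$, so no vertex lies on two co-specialist links, i.e.\ these links form a matching.

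Part (e) carries the heaviest bookkeeping, and I expect the equality conditions to be the main obstacle. The chain is routine: $x_i \leq e^*$ is \eqref{eq:less effort in network}, $e^* \leq \scn{i}{x}$ is the equilibrium condition, and for $\scn{i}{x} \leq d_i e^*$ I split cases — a non-free-rider has $\scn{i}{x} = e^* \leq d_i e^*$ (as $d_i \geq 1$), while a free rider has $\scn{i}{x} = \sum_{j \in N_i} x_j \leq d_i e^*$, a sum of $d_i$ terms each at most $e^*$. The equality statements then read off these cases: a non-free-rider gives tightness in the second inequality; tightness in the first is by definition $x_i = e^*$, i.e.\ $i$ is a specialist, which being a non-free-rider also yields $\scn{i}{x} = e^*$. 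For the third inequality the direction $\Leftarrow$ is immediate, and for $\Rightarrow$ I argue that a non-free-rider with $\scn{i}{x} = d_i e^*$ would force $e^* = d_i e^*$; when $d_i \geq 2$ this is impossible, so $i$ free rides and then each neighbour's effort must saturate at $e^*$, making every neighbour a specialist. The genuinely delicate point is the boundary $d_i = 1$, where $e^* \leq \scn{i}{x} \leq d_i e^* = e^*$ forces $\scn{i}{x} = e^*$ unconditionally; there the third inequality is always tight and coincides with the second, so the stated equivalence is meaningful only for $d_i \geq 2$, a caveat I would flag explicitly. Finally, the clause $d_i = 0 \Rightarrow x_i = \scn{i}{x} = e^*$ follows because an isolated agent has $\scn{i}{x} = x_i$, whence the equilibrium condition forces $x_i \geq e^*$ and \eqref{eq:less effort in network} forces $x_i \leq e^*$.
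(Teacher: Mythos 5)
Your proof is correct and follows essentially the same route as the paper's: each part is read off vertex-by-vertex from the equilibrium conditions $x_i \geq 0$, $\scn{i}{x} \geq e^*$, and $x_i = 0$ or $\scn{i}{x} = e^*$ (the only cosmetic difference being that in (c) you derive the contradiction from the neighbour's complementarity condition rather than the specialist's). Your caveat in (e) that the third inequality is unconditionally tight when $d_i = 1$, so the stated ``only if'' is genuinely an equivalence only for $d_i \geq 2$, is a real edge case that the paper's own proof passes over silently, and is worth retaining.
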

\begin{proof}[of Lemma \ref{lem:basic results of LCP(G)}] See Appendix
\end{proof}

\subsection{Networks with dependants}
We now show a few results about the structure of the equilibria of public goods provision over a network containing at least one dependant-guardian pair.
Recall that, in a network, if an agent $i$ is adjacent to only one other agent $j$, then we call $i$ a \textit{dependant} of $j$ ($j$ is called the guardian of $i$).
The link $(i,j)$ linking a dependant to its guardian is called a \textit{pendant line} (a term borrowed from graph theory). If $j$ and $i$ are dependants of each other, we call them co-dependants.
Table \ref{tab:dependants} the summarizes results regarding equilibria in networks which have at least one dependant-guardian pair. In Table~\ref{tab:dependants}, by stable equilibrium we mean stability under best-response dynamics (as defined and considered by~\cite{bramoulle2007public}).

\begin{table}[]
\centering

\begin{tabular}{|p{0.30\textwidth}|p{0.23\textwidth}|p{0.10\textwidth}|p{0.23\textwidth}|}
\hline
& \multicolumn{3}{c|}{\large \it If the equilibrium is \ldots}\\ \cline{2-4} {\large \it Then \ldots}
& \multicolumn{1}{c|}{1. Any  equilibrium} & \multicolumn{1}{c|}{2. Stable} & \multicolumn{1}{c|}{3. Unstable}  \\ \hline
1. an agent which is the single dependant of its guardian  is  & either a specialist, free rider or co-specialist & \multicolumn{1}{c|}{a specialist} & either a specialist, free rider or  co-specialist\\ \hline 
2. agents that are \textit{not} single dependants of their guardian are  & all specialists or all free riders & \multicolumn{1}{c|}{all specialists} & all specialists or all free riders  \\ \hline
\end{tabular}
\caption{\label{tab:dependants}This table describes efforts of dependant agents based on whether an equilibrium is stable or not.}
\end{table}

\begin{proof}[of Table \ref{tab:dependants}] 
Consider a game on a network having a guardian $i$ with a dependant $j$. 
First consider an arbitrary equilibrium $x$ of this game. 
Clearly, we have three possibilities: $x_j=0$ ($j$ is a free rider), $x_j=e^*$ ($j$ is a specialist) or $0 < x_j < e^*$. Suppose $j$ is neither a specialist nor a free rider. By \eqref{eq:equilibrium efforts}, $\scn{j}{x} = x_j + x_i = e^*$, \ie, $i$ and $j$ are co-specialists. Hence a dependant  is either a specialist, free rider or co-specialist.
We now show that if $i$ has in addition to $j$, another dependant, say $k$, then $j$ cannot be a co-specialist. Observe that $k$ being a neighbour of co-specialist agent $i$ is a free rider by Lemma \ref{lem:basic results of LCP(G)}\ref{lem:co-specialists matching}. Hence $\scn{k}{x} = x_i < e^*$, which contradicts \eqref{eq:equilibrium efforts}. Hence $j$ can be either a specialist or a free rider but not a co-specialist. Since $d_j = 1$, from \eqref{lem:inequalities in efforts}, $\scn{j}{x} = e^*$. If $j$ is a free rider, then $\scn{j}{x} = x_i = e^*$, \ie, $i$ is specialist, whereby all dependants of $i$ are free riders, by Lemma \ref{lem:basic results of LCP(G)}\ref{lem:specialist IS}. And if $j$ is a specialist, $i$ is a free rider (again by  Lemma \ref{lem:basic results of LCP(G)}\ref{lem:specialist IS}), whereby from \eqref{eq:equilibrium efforts} every dependant $k$ of $i$ must exert effort $e^*$.  
Hence, dependants of $i$ are either all specialists (in which case $i$ is a free rider) or they are all free riders (in which case $i$ is a specialist).

Now consider a stable equilibrium. Bramoulle and Kranton showed that an equilibrium is stable only if it is specialized and all its free riders have at least two specialist neighbours (See \cite[Thm. 2]{bramoulle2007public}). Since the equilibrium is stable, it is necessarily specialized whereby the dependant $j$ is either a specialist or a free rider. But if $j$ is a free rider, it has only one specialist neighbour $i$ whereby the equilibrium is unstable. Hence every dependant of the network is necessarily a specialist for an equilibrium to be stable under best response dynamics.

For unstable equilibria, the results follow from the discussion of arbitrary equilibria.
%The degree $d_j$ of any dependant $j$ is 1 from Lemma \ref{lem:basic results of LCP(G)}\ref{lem:bounds on C_j}, whereby in an equilibrium $x$, $\scn{j}{x} = e^*$. If in an equilibrium the guardian $i$ of a dependant $j$ is not a specialist or a free rider, \ie $ 0 < x_i < e^*$, then $\scn{j}{x} := x_i + x_j = e^*$, whereby $i$ and $j$ form a co-specialist pair. In such a case,  all the neighbours of the guardian $j$ are free riders by Lemma \ref{lem:basic results of LCP(G)}\ref{lem:specialist IS}. If however, the guardian $j$ has another dependant $k$, sum of closed neighbourhood of $k$ is $\scn{k}{x} := x_k+ x_j = x_j < e^*$ since $k$ is a free rider and $j$ is a co-specialist. This contradicts \eqref{eq:equilibrium efforts}. Hence if a guardian has multiple dependants, then the guardian is either a free rider or specialist; whereby its dependants are specialists or free riders respectively. 
\end{proof}

Observe that if a co-dependant pair exists, then only one of the two agents (who are both dependants) can be specialists whereby the equilibrium is always unstable.
For better clarity, the results in Table \ref{tab:dependants} are reorganized in Table \ref{tab:dependants2} putting in perspective the effect of the effort of a dependant at equilibrium on the stability of the equilibrium. Using the results in Tables \ref{tab:dependants} and \ref{tab:dependants2}, we have the result below regarding the non-existence of distributed equilibria in a network with at least one dependant.

\begin{table}[]
\centering

\begin{tabular}{|p{0.15\textwidth}|p{0.13\textwidth}|p{0.12\textwidth}|p{0.21\textwidth}|p{0.27\textwidth}|}
\hline & \multicolumn{3}{c|}{\large \it If a dependant is a \ldots} &  {\it \large and if it is \ldots}\\ \cline{2-4} & \multicolumn{1}{c|}{1. Specialist} & \multicolumn{1}{c|}{2. Free rider} & \multicolumn{1}{c|}{3. Co-specialist} & \multicolumn{1}{c|}{} \\ \hline
\multicolumn{1}{|c|}{\multirow{3}{0.15\textwidth}{\it \large Then the  equilibrium \ldots }} & may be stable & \multicolumn{1}{c|}{is unstable} & is unstable with suboptimal welfare $^{\#}$ &  1. the only dependant of its guardian        \\ \cline{2-5} 
\multicolumn{1}{|c|}{} & may be stable  & \multicolumn{1}{c|}{is unstable} & is not possible & 2. \textit{not} the only dependant of it guardian \\ \hline
\end{tabular}
\caption{\label{tab:dependants2} This table describes the effect of the efforts of dependant agents on the stability of the equilibrium.
\newline 
$\ ^{\#}$ A dependant can exist as a co-specialist in equilibrium only if it is the sole dependant of its guardian. However this is not only an unstable equilibrium, but also yields suboptimal welfare. We show this in Theorem \ref{thm:dependant_efficiency} later.}

\end{table}

\begin{theorem}\label{lem:dependant}{\bf (Free riders in networks with dependants)}
\\In a network with at least one dependant who is not a co-dependant, every Nash equilibrium always has a free-riding agent.
\end{theorem}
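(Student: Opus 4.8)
The plan is to prove the equivalent statement that no distributed equilibrium exists, since ``every equilibrium has a free rider'' is exactly the negation of ``there exists an equilibrium in which all agents exert positive effort.'' I would therefore suppose, for contradiction, that $x$ is a distributed equilibrium, so that $x_\ell > 0$ for every agent $\ell$, and derive a contradiction from the presence of a dependant that is not a co-dependant.

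First I would fix such a dependant $i$, and let $j$ be its guardian, so that $N_i = \{j\}$ and $d_i = 1$. Because $i$ is not a co-dependant, $j$ is not itself a dependant of $i$, i.e. $d_j \geq 2$; hence $j$ has at least one further neighbour $k \neq i$. I would then apply the equilibrium condition \eqref{eq:equilibrium efforts} to agent $i$: since $\scn{i}{x} = x_i + x_j$ and $x_i > 0$ in a distributed equilibrium, the either-or clause forces $\scn{i}{x} = e^*$, that is, $x_i + x_j = e^*$. As both $x_i > 0$ and $x_j > 0$, this makes $i$ and $j$ co-specialists in the sense defined earlier.

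The contradiction comes from the matching structure of co-specialists. By Lemma~\ref{lem:basic results of LCP(G)}\ref{lem:co-specialists matching}, every neighbour of a co-specialist free rides; applying this to the neighbour $k$ of the co-specialist $j$ yields $x_k = 0$, contradicting $x_k > 0$. Equivalently, and without invoking the lemma, one may argue directly from \eqref{eq:equilibrium efforts} at $j$: since $x_j > 0$ we must have $\scn{j}{x} = e^*$, but $\scn{j}{x} = x_j + x_i + \sum_{\ell \in N_j \setminus \{i\}} x_\ell = e^* + \sum_{\ell \in N_j \setminus \{i\}} x_\ell$, forcing every remaining neighbour of $j$, including $k$, to free ride. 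Either route contradicts the distributed assumption, so every equilibrium must contain a free rider.

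I do not anticipate a serious obstacle; the single point requiring care is the guarantee that the guardian $j$ genuinely possesses a second neighbour, which is precisely where the hypothesis ``not a co-dependant'' (equivalently $d_j \geq 2$) enters, together with the use of $x_j > 0$ from the distributed assumption, so that the relation $x_i + x_j = e^*$ certifies co-specialization rather than merely forcing $j$ to be a specialist.
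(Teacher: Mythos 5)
Your proof is correct, and it takes a somewhat different route from the paper's. The paper argues directly via an exhaustive case analysis imported from Table~\ref{tab:dependants}: it splits on whether the guardian has one or several dependants and, within each case, on whether the dependant is a specialist, a free rider, or a co-specialist, and exhibits a free rider in every branch. You instead argue by contraposition, and the assumption of a distributed equilibrium collapses all but one branch: $x_i>0$ forces $\scn{i}{x}=x_i+x_j=e^*$ at the dependant, so $i$ and $j$ are co-specialists, and then Lemma~\ref{lem:basic results of LCP(G)}\ref{lem:co-specialists matching} (or your direct computation of $\scn{j}{x}$, which is just the proof of that lemma inlined) forces the guardian's second neighbour $k$ to free ride---and the existence of that second neighbour is exactly what the no-co-dependant hypothesis supplies. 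The two arguments rest on the same underlying facts, namely the either-or condition \eqref{eq:equilibrium efforts} and the free-riding of co-specialists' neighbours, but yours is shorter because it only needs to refute one configuration rather than classify all of them; the paper's longer case analysis buys the finer structural information recorded in Tables~\ref{tab:dependants} and \ref{tab:dependants2} as a by-product. You correctly identify the one delicate point: ``not a co-dependant'' is what guarantees $d_j\geq 2$, without which the conclusion fails (an isolated link admits the distributed equilibria $x_i+x_j=e^*$ with both efforts positive).
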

\begin{proof}
Suppose a guardian agent $i$ has multiple dependants. Then by Table \ref{tab:dependants} (Row 2), either all the dependants of $i$ are free riders or they are all specialists; in the latter case $i$ itself is a free rider. Hence there exists at least one free rider.

Now suppose the guardian $i$ has only one dependant $j$ such that $i$ and $j$ are not co-dependants, \ie, $i$ has another neighbour $k$. In this case, by Table \ref{tab:dependants} (Row 1), either (i) the $j$ is a free rider (and $i$ is a specialist) or (ii) $j$ is a specialist (and $i$ is a free rider), or (iii) $i$ and $j$ form a co-specialist pair. In the latter case, $k$ is a free rider according to Lemma \ref{lem:basic results of LCP(G)}\ref{lem:co-specialists matching}. Thus in every case, there always exists at least one free riding agent.
\end{proof}

\subsection{Forest networks}
We now consider the structure of equilibria on forests. A network of agents is called a forest if there is no cycle between them. We refer the reader to \cite{bondy1976graph} for a general introduction to forests and their properties. Here we recall that an induced subgraph of a forest is a forest and that a forest network can be represented as a disjoint union of trees (thus, for any two distinct trees in this union, there is no link in the forest having one vertex in each tree). 

A vertex in a graph is isolated if it has degree zero (\ie, no neighbours). A link is said to be isolated if both vertices in the link have degree unity. Note also that a tree which is not an isolated vertex has at least two dependants \cite{bondy1976graph}. 
 The following lemma is the interpretation of our previous graph theoretic results Lemma 6(c) and Lemma 9 from \cite{pandit2016linear}, in that order, in the context of public goods provision over networks. 

\begin{figure}
\includegraphics[width=\textwidth]{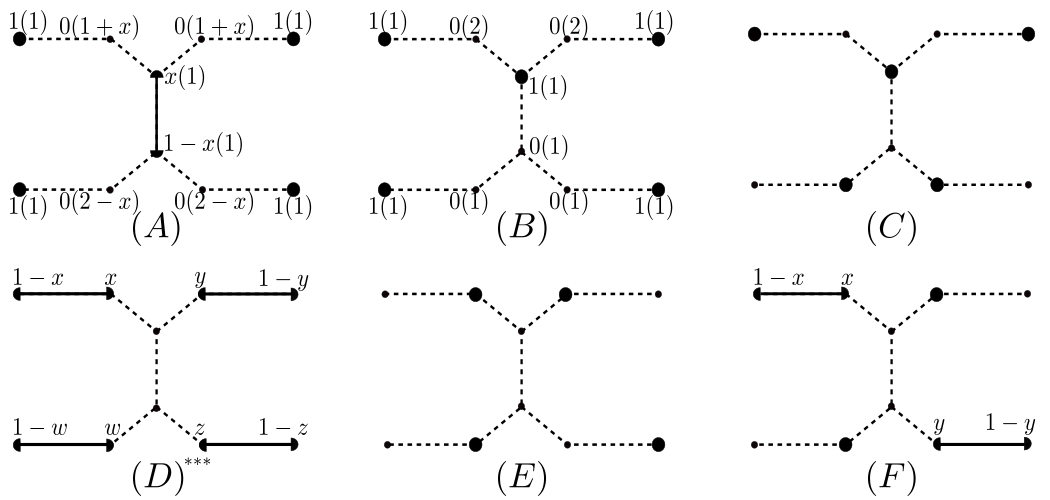}
\caption{\label{fig:trees}This figure shows six different equilibria of a public goods game over a tree network of 10 agents. A dotted line between agents indicates adjacency. Agents denoted by large black circles are specialists, whereas those denoted by small black circles are free riders. A solid line between adjacent agents indicates a co-specialist link, with black semicircles denoting the co-specialist agents. $(B),\ (C)$ and $(E)$ are specialized equilibria, whereas $(A),\ (D)$ and $(F)$ are hybrid equilibria.
For an agent $i$ of the network, the number denotes the normalized effort $\frac{1}{e^*}x_i$ exerted by it at equilibrium while the normalized effort of the closed neighbourhood $ \scn{i}{\frac{1}{e^*}x}$ is mentioned for equilibria $(A)$ and $(B)$ are mentioned in parenthesis. Observe that $\frac{1}{e^*}x_i \geq 0$, $\scn{i}{\frac{1}{e^*}x}\geq 1$ and at least one of the two inequalities is tight, as indicated by \eqref{eq:equilibrium efforts}.
\newline
$^{***}$ In equilibrium $(D)$, $x + y \geq 1$ and $w + z \geq 1$ are necessary for it to be an equilibrium.}
\end{figure}

\begin{lemma}\label{lem:trees}{\bf (Structure of equilibria in forest networks)}\\
For a public goods game over a forest network $G$,
\begin{enumerate}[label= (\alph*)]
\item \label{lem:trees0} If the game admits a distributed equilibrium, then the network is a disjoint union of isolated links and isolated agents.
\item \label{lem:trees1} The supporting agents in any equilibrium are either specialists or co-specialists.
\item \label{lem:trees2} If an equilibrium profile $x$ is not specialized, there exists another specialized equilibrium $y$ such that $\supp(y) \subset \supp(x)$. This specialized equilibrium incurs the same cost as the original equilibrium, \ie, $c\bfe \t y = c\bfe \t x$.
\item \label{lem:trees3} The cost of the equilibrium effort profile is $$ce^*\left({\rm \#(specialists) + \half\#(co-specialists)}\right),$$ where ${\rm \#(\cdot)}$ is stands for ``number of''.

\end{enumerate}
\end{lemma}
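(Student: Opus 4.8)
The plan is to make part \ref{lem:trees1} the structural core and to derive the other three parts from it. To prove \ref{lem:trees1}, I would pass to the induced subgraph $G_{\supp(x)}$ on the supporting agents, which is again a forest. Every supporting agent $i$ has $x_i>0$, so \eqref{eq:equilibrium efforts} forces $\scn{i}{x}=e^*$. If $i$ is isolated in $G_{\supp(x)}$, then all of its $G$-neighbours free ride, so $\scn{i}{x}=x_i=e^*$ and $i$ is a specialist. Otherwise $i$ lies in a component $T$ with at least two vertices; being a tree, $T$ has a leaf $v$, whose unique supporting neighbour $u$ then satisfies $x_u+x_v=\scn{v}{x}=e^*$ with $x_u,x_v>0$, so $\{u,v\}$ is a co-specialist pair. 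By Lemma \ref{lem:basic results of LCP(G)}\ref{lem:co-specialists matching} every neighbour of $u$ except $v$ free rides, which forces $u$ to be a leaf as well and $T$ to be the single edge $\{u,v\}$. Hence every supporting agent is a specialist or a co-specialist.

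Part \ref{lem:trees0} is then immediate: in a distributed equilibrium there are no free riders, so a specialist, whose neighbours must all free ride by Lemma \ref{lem:basic results of LCP(G)}\ref{lem:specialist IS}, can only be an isolated vertex, and a co-specialist pair, whose remaining neighbours must free ride by Lemma \ref{lem:basic results of LCP(G)}\ref{lem:co-specialists matching}, can only be an isolated link; thus $G$ is a disjoint union of isolated links and isolated agents. Part \ref{lem:trees3} follows by summing efforts over the support using \ref{lem:trees1}: each specialist contributes $e^*$, each co-specialist pair contributes $x_u+x_v=e^*$, and there are $\half\#(\text{co-specialists})$ such pairs, giving total cost $ce^*\big(\#(\text{specialists})+\half\#(\text{co-specialists})\big)$.

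For part \ref{lem:trees2} I would bring in the transferred graph result. By Theorem \ref{lem:NE and LCP}, $z:=\tfrac{1}{e^*}x$ solves $\LCP(G)$, and since $G$ is a forest, Lemma \ref{lem:old lemmas}\ref{lem:old lemmas forest2} produces a maximal independent set $S\subseteq\supp(z)=\supp(x)$. By Theorem \ref{thm:old results}\ref{thm:characteristic vectors} its characteristic vector solves $\LCP(G)$, so $y:=e^*\bfone_S$ is a specialized equilibrium with $\supp(y)=S\subseteq\supp(x)$. Since $x$ is not specialized, \ref{lem:trees1} supplies an adjacent co-specialist pair inside $\supp(x)$, so $\supp(x)$ is not independent and the inclusion is strict. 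Finally, maximality of $S$ forces it to contain every specialist and exactly one vertex of each co-specialist pair, so $|S|=\#(\text{specialists})+\half\#(\text{co-specialists})$, and comparing with \ref{lem:trees3} gives $c\bfe\t y=ce^*|S|=c\bfe\t x$.

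I expect the main obstacle to be the leaf argument in part \ref{lem:trees1}: showing that no supporting component can exceed a single edge is the one step that genuinely uses the acyclicity of $G$ (it is what guarantees each component of $G_{\supp(x)}$ has a leaf), and all the other parts lean on it. A secondary subtlety lies in the cost equality of part \ref{lem:trees2}, where one must argue from maximality that $S$ meets each co-specialist pair in exactly one vertex and contains all specialists, so that the cost comes out exactly equal rather than merely comparable.
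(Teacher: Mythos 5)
Your proof is correct, but it reverses the paper's logical order for the two structural parts and substitutes a different core argument. The paper proves part \ref{lem:trees0} first, by decomposing the forest into trees and invoking Theorem \ref{lem:dependant} (every tree on more than two vertices has a dependant--guardian pair that are not co-dependants, so no distributed equilibrium exists); it then obtains part \ref{lem:trees1} as a corollary by restricting to $G_{\supp(x)}$, noting via Lemma \ref{lem:basic results of LCP(G)}\ref{lem:undisturbed} that the restricted profile is a distributed equilibrium on that induced forest. You instead prove \ref{lem:trees1} directly with a leaf argument --- every nontrivial tree component of $G_{\supp(x)}$ has a leaf $v$, whose unique supporting neighbour $u$ forms a co-specialist pair with it, and Lemma \ref{lem:basic results of LCP(G)}\ref{lem:co-specialists matching} then collapses the component to the single edge $\{u,v\}$ --- and deduce \ref{lem:trees0} as an immediate consequence. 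Your route is more self-contained (it bypasses the dependant/guardian machinery of Section \ref{sec:structure} entirely and isolates acyclicity as the single ingredient), while the paper's route reuses its earlier structural results. For parts \ref{lem:trees2} and \ref{lem:trees3} you follow essentially the paper's argument: Lemma \ref{lem:old lemmas}\ref{lem:old lemmas forest2} supplies the maximal independent set $S\subseteq\supp(x)$, and the cost equality comes from showing $S$ contains every specialist and exactly one member of each co-specialist pair --- you argue this from $S$ being dominating, the paper from the equilibrium conditions on $y$, but these rest on the same facts. One small point worth making explicit in your write-up of \ref{lem:trees2}: the strictness of the inclusion $\supp(y)\subset\supp(x)$ follows because a non-specialized equilibrium must, by \ref{lem:trees1}, contain an adjacent co-specialist pair, which no independent set can contain; you do state this, and it completes the argument.
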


\begin{proof}
\begin{enumerate}[label = (\alph*)]
\item Consider a forest network $G$ which is a disjoint union of $m$ trees $T_1,\ldots,T_m$, such that $G$ admits a distributed equilibrium $x$. Then, since the trees $T_1,\hdots,T_m$ are disjoint, the subvector $x_{T_k}$ corresponding to efforts of agents in $T_k$ is a distributed equilibrium over $T_k$ for all $k$. 

For some $k$, let agent $j$ be a dependant in $T_k$ and let $i$ be the guardian of $j$. If $T_k$ has more than two agents, we must have $d_i > 1$ (since $T_k$ is a connected graph), whereby $i$ and $j$ are not co-dependants. Hence $T_k$ satisfies the condition in Lemma \ref{lem:dependant}. This contradicts the possibility of a distributed equilibrium $x_{T_k}$. Hence for any $k$, $T_k$ must consist of at most two agents. Hence $G$ consists of isolated links and isolated agents.

\item Let $x \in \NE(G)$. The network of supporting agents $G_{\supp(x)}$ is also a forest since it is an induced subgraph of the forest $G$. Let $x_{\supp(x)}$ denote the efforts exerted by the supporting agents. Observe that from Lemma \ref{lem:basic results of LCP(G)}\ref{lem:undisturbed}, it follows that $x_{\supp(x)}$ is an equilibrium for the public goods game over the forest $G_{\supp(x)}$. Moreover, this equilibrium is distributed. Now, from Lemma \ref{lem:trees}\ref{lem:trees0}, it follows that $G_{\supp(x)}$ consists only of isolated links and isolated vertices. The isolated links correspond to co-specialist links whereas the isolated vertices correspond to specialists. This proves the claim.

\item By Theorem \ref{lem:NE and LCP}, we know that $\frac{1}{e^*}x$ solves $\LCP(G)$. 
%The supporting agents of the equilibrium $x$ are $\supp(x)$. 
We know from Lemma \ref{lem:old lemmas}\ref{lem:old lemmas forest2} that if the network is a forest, there always exists a maximal independent set $S \subseteq \supp(\frac{1}{e^*}x)=\supp(x)$. The agents in $S$ form a specialized equilibrium given by $y:=e^*\bfone_S$, whereby $\supp(y) = S$. This proves the first part of the lemma. We now show that both equilibria $x$ and $y$ incur the same cost.

From Lemma \ref{lem:trees}\ref{lem:trees1} above, we know that the supporting agents in $x$ are either specialists or co-specialists. Let agent $k$ be a specialist in the equilibrium $x$. We first show that $k$ is also a specialist in $y$. Observe that if $k$ is not a specialist in $y$, then $k$ is a free rider. Hence $\scn{k}{y} = \sum_{k'\in \supp(y)}a_{kk'}y_{k'} \leq \sum_{k'\in \supp(x)}a_{kk'}y_{k'} = 0,$ since by Lemma \ref{lem:basic results of LCP(G)}\ref{lem:specialist IS}, $k$ has no neighbours in $\supp(x)$. This means that $y$ does not satisfy \eqref{eq:equilibrium efforts}; a contradiction to $y$ being an equilibrium. Every specialist in $x$  is a specialist in $y$.

Now let $i$ and $j$ be co-specialists in $x$. We now show that exactly one of $i$ and $j$ specializes in $y$ whereas the other free rides. By Lemma \ref{lem:basic results of LCP(G)}\ref{lem:specialist IS}, it is clear that both $i$ and $j$ cannot be specialists in $y$, since they are adjacent in $G$. On the other hand, suppose both $i$ and $j$ are free riders in $y$. Then, $\scn{i}{y} = \sum_{k'\in \supp(y)}a_{ik'}y_{k'} \leq \sum_{k'\in \supp(x)}a_{ik'}y_{k'} = y_j + \sum_{k'\in \supp(x)\backslash \{j\}}a_{ik'}y_{k'} = 0,$ since by Lemma \ref{lem:basic results of LCP(G)}\ref{lem:co-specialists matching} neighbours of co-specialists free ride, whereby $i$ has no neighbours in $\supp(x)$ other than $j$.

Thus, out of every pair of co-specialists in $x$, one agent specializes in the specialized equilibrium $y$, while the other free rides. This substitution of effort between co-specialists requires the same total effort. Hence, the equilibrium cost 
\begin{equation}
c\bfe \t x = c \left (\#({\rm specialists\ in\ } x)  + \half \#({\rm co-specialists\ in\ } x)\right )  =c \# \left ( {\rm specialists\ in\ } y \right )= c \bfe \t y, \label{eq:forest cost} 
\end{equation}
\ie, the cost incurred by both equilibria is the same.

\item This was shown in \eqref{eq:forest cost} above.
\end{enumerate}
\end{proof}

Lemma \ref{lem:trees}\ref{lem:trees1} means that in an equilibrium $x$ in a public goods game over a forest, if an agent $i$ is neither a specialist nor a free rider, \ie, $0 < x_i < e^*$, then there always exists a neighbour $j$ of $i$ that is a co-specialist of $i$ in equilibrium $x$ and, in another equilibrium, substitutes the deficit of the effort of $i$. In the specialized equilibrium described in part \ref{lem:trees2}, in every co-specialist pair, one agent specializes to substitutes the effort of its co-specialist, who then free rides. The results in Lemma \ref{lem:trees} are summarized in Table \ref{tab:trees}. 

In Figure \ref{fig:trees}, various equilibria on a tree network are shown. Notice that $(B)$ and $(E)$ are specialized equilibria contained in the support of equilibria $(A)$ and $(D)$ respectively. 

\begin{table}[]
\centering

\begin{tabular}{|p{0.14\textwidth}|p{0.1\textwidth}|p{0.30\textwidth}|p{0.15\textwidth}|}
\hline
\multirow{2}{*}{}                                               & \multicolumn{3}{c|}{\it \large If network is \ldots}                                                                                    \\ \cline{2-4} 
\multicolumn{1}{|l|}{\it \large Then \ldots} & \multicolumn{1}{c|}{Any network}     & \multicolumn{1}{c|}{Forest} & \multicolumn{1}{c|}{Well-covered forest} \\ \hline
Supporting agents are & \multicolumn{1}{c|}{not free riders} & \multicolumn{1}{c|}{specialists or co-specialists} & \multicolumn{1}{c|}{specialists or co-specialists}\\ \hline
Total equilibrium cost is & \multicolumn{1}{c|}{$c \ \bfe \t x$} & \multicolumn{1}{c|}{$ce^*\left({\rm \#(specialists) + \half\#(co-specialists)}\right)$} & \multicolumn{1}{c|}{$\half ce^* n$ $^{\dag\dag}$}  \\ \hline
\end{tabular}
\caption{Summary of roperties of supporting agents as described in Lemma \ref{lem:trees}. \newline $^{\dag\dag}$ Follows from Theorem \ref{thm:well-covered}\ref{thm:well-covered cost}, proved later in Section 6.}\label{tab:trees} 

\end{table}

\section{Refinement of the equilibrium based on weighted total  effort and cost}

In this section, we prove Theorem \ref{thm:total effort} and compare equilibria based on their total weighted effort and the total cost incurred, and study how the structure of the network affects the existence of an optimum specialized equilibrium. The cost incurred by an equilibrium is $c\bfe \t x$, whereas for a set of weights $w_i \geq 0$, the $w$-weighted effort is $w \t x$. Recall from the discussion in the introduction that although the above functions for ranking equilibria are linear, finding the optimum equilibrium is a hard problem.

\subsection*{Proof of Theorem \ref{thm:total effort}}\begin{enumerate}[label=(\alph*)]
\item Let $S^*$ be the maximum $w$-weighted independent set in $G$. We need to show that $e^*\bfone_{S^*}$ is a specialized equilibrium that attains the maximum $w$-weighted equilibrium effort. Let $x$ be an equilibrium profile. By Theorem \ref{lem:NE and LCP} we know that $\frac{1}{e^*}x \solves \LCP(G)$, whereby from Theorem \ref{thm:old results}\ref{thm:wTx} we can say that $$\EWstar = \max\left\{ w \t x \newmid x \in {\rm NE}(G)\right\} = \max\left\{ w \t e^* y \newmid y \solves \LCP(G)\right\} = e^*\alpha_w(G).$$  Since $w_i \geq 0$ for all $i\in V,$ $S^*$ is also a maximal independent set.  As a consequence $e^*\bfone_{S^*}$ is a specialized equilibrium, and its weighted effort is $e^* w \t \bfone_{S^*} = e^*\alpha_w(G) $, by definition of the $w$-weighted maximum independent set. It follows that $e^*\bfone_{S^*} $ attains the maximum total weighted equilibrium effort and hence $\EWstar = \EWSstar$.

\item Observe that the minimum cost in equilibrium is given by, $$ \Cstar = c\cdot \min\left\{\bfe \t x \newmid x \in {\rm NE}(G) \right\} = c\cdot e^* \min\left\{\bfe \t x \newmid x\ {\rm solves} \ \LCP(G)\right\}.$$
 By Theorem \ref{thm:old results}\ref{thm:beta equality} we can say that if the network $G$ is a forest, the above quantity is $c e^*\beta(G)$. Let $S'$ denote the smallest maximal independent set of $G$, then $e^*\bfone_{S'}$ is a specialized equilibrium which incurs a cost $ce^* \bfe \t \bfone_S = ce^*\beta(G)$. Hence if the network is a forest, we have that $\Cstar = \CSstar$, and $e^*\bfone_{S'}$ incurs the minimum cost.

\item We first show a more general result: If $y$ is an equilibrium and a subset $S\subseteq \supp(y)$ of the supporting agents of $y$ form a maximal independent set of the network, then the specialized equilibrium supported on $S$ requires total effort at least as much as that of $y$, \ie, $e^* |S| \geq \sum_j y_j$. Let $U := \supp(y) \backslash S$.  Then, from \eqref{eq:equilibrium efforts}, $\forall\; i \in \supp(y),$
$$ \quad \scn{i}{y} = \sum_{j \in V}a_{ij}y_j + y_i = e^*.$$
Summing over $i\in S$ gives,
\[ \sum_{i \in S}\sum_{j\in V}a_{ij}y_j +  \bfe \t y - \sum_{j \in U}y_j = e^*|S|.\]
Thus,
\begin{equation*}
e^*|S| - \bfe \t y
\stackrel{(a)}{=} \sum_{i\in S}\sum_{j\in U} a_{ij} y_j  - \sum_{j \in U}y_j
= \sum_{j \in U} (|N_S(j)| - 1)y_j \stackrel{(b)}{\geq} 0  .
\end{equation*}
The equality in $(a)$ follows from the observation that $a_{ij}$ is nonzero for $i \in S$ only if $j \notin S$ due to $S$ being an independent set. Moreover, in this set $V \backslash S$, $y_j$ is non-zero only for $j \in U$. Now, since $S$ is a maximal independent set, every vertex not in $S$ has at least one neighbour in $S$ whereby $|N_S(j)| \geq 1$ and justifies the inequality $(b)$. 

Now, if the network is such that it admits a distributed equilibrium, all the agents in the network are supporting agents of this equilibrium whereby the support of any specialized equilibrium is clearly its subset. Following the discussion above, we can say that the cost of a distributed equilibrium is at most as much as that of any specialized equilibrium, \ie, $\CDstar \leq \CSstar$.

%Consider a distributed equilibrium $x^*$ for which, by definition, $\supp(x^*)=V$. Hence, all maximal independent sets of $G$ are subsets of $\supp(x^*)$. Hence a distributed equilibrium requires at most as much effort as any specialized equilibrium.

\item By Lemma \ref{lem:old lemmas}\ref{lem:regular}, we know that if the network is regular, $\frac{\bfe}{d+1}$ is a solution to $\LCP(G)$ such that it minimizes the function $\bfe \t x$ amongst all solutions of $\LCP(G)$. It follows from Theorem \ref{lem:NE and LCP} that $e^*\frac{\bfe}{d+1}$ is a distributed equilibrium of the public goods game of the regular network with minimum cost, whereby $\Cstar = \CDstar$. This proves the Theorem \ref{thm:total effort}\ref{thm:total cost_regular}.
\end{enumerate}
%\item We know from Theorem \ref{lem:NE and LCP} that all solutions to $\LCP(G)$ correspond to equilibria of the public goods game over $G$. Hence from  Theorem \ref{thm:old results} \ref{thm:beta} we can say that the minimum effort required by a specialized equilibrium is in general more than that required over all equilibria. Moreover following Theorem \ref{thm:old results} \ref{thm:beta equality}, we know that if the network is a forest, the minimum effort required by a specialized equilibrium is equal to that required amongst all equilibria. 
%From Lemma \ref{lem:minimum effort} \ref{lem:minimum effort1}, we know that the total effort and hence the total cost in a specialized equilibrium is at least that in a distributed equilibrium, which proves Theorem \ref{thm:total effort} \ref{thm:total cost}.

\section{Refinement of the equilibrium based on welfare}\label{sec:refine}

In this section we prove Theorem \ref{thm:main} by studying the welfare of specialized equilibria. We establish conditions on the concavity of the benefit function and on the structure of the underlying network for which there exists a specialized equilibrium which yields maximum equilibrium welfare. Moreover, we also give the specialized equilibrium which yields optimal welfare, under these conditions.

We show that under certain conditions on the concavity of the benefit function, a limiting result is possible for all networks. However if the class of networks is narrowed to forests, then the result holds for a larger class of benefit functions $b$. Hence specialized equilibria may be considered a refinement of the Nash equilibrium  of the public goods game over a network, when searching for welfare maximizing equilibria.

While proving Theorem \ref{thm:main} we first show bounds on the welfare of equilibrium profiles thereby establishing a range for the maximum equilibrium welfare. We then show the convergence of these bounding functions to identify the behaviour of the welfare function and the maximum equilibrium welfare as the concavity of the benefit function varies. While varying the concavity of the benefit function, we keep $b(e^*)$ and $b'(e^*)$ fixed whereby the equilibra remain unchanged.
  
To this end, for $\mu \in \Real^n$, define, \[\theta_{\mu} := \max\{ \mu \t x \mid x \in \NE(G) \} \qquad \aur\qquad  \theta_\mu^{\mathsf{S}} := \max\{ \mu \t x \mid x \in \SNE(G) \}.\]
\begin{lemma} \label{lem:continuity of theta} For $\mu \in \Real^n$, $\theta_{\mu}$ and $\theta_\mu^{\mathsf{S}}$ are continuous functions of $\mu$.
\end{lemma}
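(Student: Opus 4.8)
The plan is to recognize $\theta_\mu$ and $\theta_\mu^{\mathsf{S}}$ as support functions of \emph{fixed} compact sets. The crucial observation is that the feasible regions $\NE(G)$ and $\SNE(G)$ do not depend on $\mu$ at all: the parameter $\mu$ enters only through the linear objective $\mu\t x$, so varying $\mu$ amounts merely to changing the direction in which we maximize a linear functional over a set that is fixed once $G$ and $e^*$ are fixed. By Theorem \ref{lem:NE and LCP}, $\NE(G) = e^*\{x \mid x \solves \LCP(G)\}$, and $\SNE(G)$ is the finite collection $\{e^*\bfone_S \mid S \text{ a maximal independent set of } G\}$.

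First I would establish that both $\NE(G)$ and $\SNE(G)$ are nonempty and compact. Nonemptiness holds because specialized equilibria always exist (every graph has a maximal independent set), so $\SNE(G) \subseteq \NE(G)$ are both nonempty and each maximum is attained. For compactness of $\NE(G)$, boundedness is immediate from the equilibrium bounds $0 \le x_i \le e^*$ in \eqref{eq:less effort in network}, while closedness follows because the solution set of $\LCP(G)$ is the intersection of the closed sets $\{x \ge 0\}$, $\{(A+I)x \ge \bfe\}$ and $\{x \mid x\t((A+I)x - \bfe) = 0\}$, the last being the zero set of a continuous function. The set $\SNE(G)$ is finite, hence trivially compact.

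Then, for any fixed compact $K \subseteq \Real^n$, the support function $h_K(\mu) := \max\{\mu\t x \mid x \in K\}$ is Lipschitz: if $x$ attains the maximum for $\mu$, then for any $\mu'$,
\[
h_K(\mu) - h_K(\mu') \le \mu\t x - (\mu')\t x = (\mu - \mu')\t x \le \norm{\mu - \mu'}\,\norm{x} \le R\,\norm{\mu-\mu'},
\]
where $R := \max_{x \in K}\norm{x}$ is finite by compactness, and symmetry in $\mu,\mu'$ yields $|h_K(\mu) - h_K(\mu')| \le R\norm{\mu-\mu'}$. Applying this with $K = \NE(G)$ gives continuity (in fact Lipschitz continuity with constant $e^*\sqrt{n}$, since each $\norm{x} \le e^*\sqrt{n}$) of $\theta_\mu$, and with $K = \SNE(G)$ gives the same for $\theta_\mu^{\mathsf{S}}$. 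There is essentially no genuine obstacle here; the single point requiring care is the closedness of the $\LCP(G)$ solution set, which is what secures the compactness of $\NE(G)$ and lets the support-function argument go through.
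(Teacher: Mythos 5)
Your proposal is correct and rests on exactly the same two observations as the paper's proof: the feasible sets $\NE(G)$ and $\SNE(G)$ are compact and do not depend on $\mu$, so $\theta_\mu$ and $\theta_\mu^{\mathsf{S}}$ are value functions of a linear objective over fixed compact sets. The only difference is that the paper then cites a general stability result for value functions (Hogan), whereas you close the argument with a self-contained Lipschitz estimate for the support function of a compact set; both are valid, and your version is the more elementary of the two.
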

\begin{proof}
Observe that $\theta_{\mu}$ and $\theta_{\mu}^{\mathsf{S}}$ are both value functions of the optimization of a continuous function $\mu\t x$ over sets $\NE(G)$ and $\SNE(G)$ that are compact as well as independent of $\mu$. If follows from stability results in optimization theory, such as \cite[Thm. 7]{hogan73point}, that both $\theta_{\mu}$ and $\theta_{\mu}^{\mathsf{S}}$ are continuous.
\end{proof}

For the rest of the section, we assume that the network does not have any isolated agents,  \ie, $d_i \geq 1$ for all $i$ as assumed in the statement of the theorem. Define 
{\color{black} $$\sigma_j:= \begin{cases} \frac{b(d_j e^*) - b(e^*)}{ce^*(d_j-1)} &\eef d_j > 1, \\
\sigma_b & \eef d_j = 1.
\end{cases} \qquad \aur \qquad \sigma_j':= \begin{cases} \frac{1}{c}b'(d_j e^*) &\eef d_j > 1, \\
\sigma_b & \eef d_j = 1.
\end{cases}$$  If $d_j> 1$, observe that $\sigma_j$ denotes the normalized\footnote{Normalization refers to division by $c$.} slope of the secant between $(e^*,b(e^*))$ and $(d_j e^*,b(d_j e^*))$, while $\sigma'_j$ denotes the normalized slope of the tangent to the benefit function $b$ at the point $d_j e^*$. These have been depicted in Figure \ref{fig:bounds using concavity} for clarity.
Let $l_j = (\sigma_j   + \sum_{i }a_{ij}\sigma_i  - 1)$ and  $u_j = (\sigma_j '   + \sum_{i }a_{ij}\sigma_i'  - 1)$. Denote by $\sigma,\ \sigma ',\ l$ and $u$ the corresponding vectors with $n$ components.

In establishing Theorem~\ref{thm:main}, we need to be able to formally relate $\WU^*$ to changes in $\sigma_b.$ We also note that while the limiting behaviour of $\WU$ as $\sigma_b \rightarrow 1$ is known from~\cite[Prop 1]{bramoulle2007public}, it does not automatically yield a proof of Theorem~\ref{thm:main}. The following theorem provides  linear upper and lower bounds on the welfare of equilibrium effort profiles and the maximum equilibrium welfare. These bounds are essential in the proof of Theorem~\ref{thm:main}. }

\begin{theorem}\label{lem:linear bounds}
For a public goods game over a network without isolated agents, if $x$ is an equilibrium effort profile,
\begin{enumerate}[label=(\alph*)]
\item \label{lem:linear bounds1}
The welfare function is bounded as follows,
\begin{equation}\label{eq:bounds}
c l\t x   \leq \WU(x) - nb(e^*) + ce^*\bfe \t \sigma \leq  c\ u \t x + ce^*d \t \big(\sigma - \sigma'\big),
\end{equation}
whereby,
\begin{equation}\label{eq:bound on WU*}
c\theta_{l}  \leq \WU^* - nb(e^*) + ce^* \bfe \t \sigma  \leq c\theta_{u} + c e^*d \t \big(\sigma - \sigma' \big),
\end{equation}
and similarly ,
\begin{equation}\label{eq:bound on WUS*}
ce^*\theta_{l}^{\mathsf{S}}  \leq \WUSstar - nb(e^*) + ce^* \bfe \t \sigma  \leq ce^*\theta_{u}^{\mathsf{S}} + c e^*d \t \big(\sigma - \sigma' \big).
\end{equation}

\item \label{thm:limits}Keeping $b'(e^*)$ and $b(e^*)$ fixed and varying $\sigma_b$, we have that,
\begin{equation}
\lim_{\sigma_b \rightarrow 1} l_j = \lim_{\sigma_b \rightarrow 1} u_j = d_j,\qquad {\rm\ for\ all\ }j,
\end{equation}
\begin{equation}
\lim_{\sigma_b \rightarrow 0} l_j = \lim_{\sigma_b \rightarrow 0} u_j = -1,\qquad {\rm\ for\ all\ }j,
\end{equation}
\begin{equation}
\lim_{\sigma_b \rightarrow 0} \sigma_j-\sigma_j' = \lim_{\sigma_b \rightarrow 1} \sigma_j-\sigma_j' = 0,\qquad {\rm\ for\ all\ }j.
\end{equation}
\item Keeping $b'(e^*)$ and $b(e^*)$  fixed and varying $\sigma_b$, we have that, 
\label{lem:linear bounds2}
\begin{align*}
\lim_{\sigma_b \rightarrow 1}\WU(x) &= n\big(b(e^*) - ce^*\big)  + c d \t x, \\
\lim_{\sigma_b \rightarrow 0}\WU(x) &= nb(e^*) - c \bfe\t x. 
\end{align*}
\end{enumerate}
\end{theorem}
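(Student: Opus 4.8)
The plan is to sandwich each term $b(\scn{i}{x})$ of the welfare using concavity, then sum and re-index. For part~\ref{lem:linear bounds1}, at an equilibrium Lemma~\ref{lem:basic results of LCP(G)}\ref{lem:bounds on C_j} gives $e^* \le \scn{i}{x} \le d_i e^*$, so $\scn{i}{x}$ lies on the interval where the chord joining $(e^*,b(e^*))$ to $(d_i e^*, b(d_i e^*))$ lies below the graph of the concave $b$; this chord has normalized slope $\sigma_i$, yielding $b(\scn{i}{x}) \ge b(e^*) + c\sigma_i(\scn{i}{x}-e^*)$. For the upper bound I would use that the tangent to $b$ at $d_i e^*$, whose normalized slope is $\sigma_i'$, lies above the graph, giving $b(\scn{i}{x}) \le b(d_i e^*) + c\sigma_i'(\scn{i}{x} - d_i e^*)$. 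Summing over $i$, substituting $\scn{i}{x} = x_i + \sum_j a_{ij}x_j$, and re-indexing the double sum via the symmetry $a_{ij}=a_{ji}$ (so that $\sum_i \sigma_i \sum_j a_{ij}x_j = \sum_j x_j \sum_i a_{ij}\sigma_i$) collects the coefficient of $x_j$ into exactly $l_j = \sigma_j + \sum_i a_{ij}\sigma_i - 1$ in the lower bound and into $u_j$ in the upper bound. The $d_i=1$ agents are consistent, since there $\scn{i}{x}=e^*$ is forced and $\sigma_i=\sigma_i'=\sigma_b$ by definition, so their contribution is exactly $b(e^*)$. Re-expressing $b(d_i e^*) = b(e^*) + ce^*(d_i-1)\sigma_i$ and gathering the $x$-independent pieces produces the additive constant $ce^*\, d\t(\sigma - \sigma')$, establishing \eqref{eq:bounds}. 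The bounds \eqref{eq:bound on WU*} and \eqref{eq:bound on WUS*} then follow by maximizing the pointwise inequality \eqref{eq:bounds} over $\NE(G)$ and $\SNE(G)$: since the additive terms are independent of $x$ and $\max\{l\t x \mid x \in \NE(G)\} = \theta_l$, $\max\{u\t x \mid x \in \NE(G)\} = \theta_u$, the optimal value inherits the same sandwich.

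For part~\ref{thm:limits} the plan is to bound $\sigma_j$ and $\sigma_j'$ by concavity and then squeeze (the case $d_j=1$ is immediate since $\sigma_j=\sigma_j'=\sigma_b$). Monotonicity of the secant slopes of a concave function gives $\sigma_b \le \sigma_j \le 1$ and $\sigma_j' \le \sigma_j$, while $b'>0$ and $b'$ decreasing give $0 \le \sigma_j' \le 1$; comparing the tangent at $d_j e^*$ with the secant over $[d_j e^*, ne^*]$ yields the lower bound $\sigma_j' \ge (\sigma_b(n-1) - \sigma_j(d_j-1))/(n-d_j)$. As $\sigma_b \to 1$, the sandwich $\sigma_b \le \sigma_j \le 1$ forces $\sigma_j \to 1$, and this lower bound then forces $\sigma_j' \to 1$. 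As $\sigma_b \to 0$, since $b$ is increasing and $b(ne^*) = b(e^*) + c\sigma_b(n-1)e^* \to b(e^*)$, the value $b(d_j e^*)$ is squeezed down to $b(e^*)$ (using $d_j \le n-1$), whence $\sigma_j \to 0$ and, by $0 \le \sigma_j' \le \sigma_j$, also $\sigma_j' \to 0$. Substituting these limits into $l_j$ and $u_j$ and using $\sum_i a_{ij} = d_j$ gives $l_j, u_j \to d_j$ as $\sigma_b \to 1$ and $l_j, u_j \to -1$ as $\sigma_b \to 0$, with $\sigma_j - \sigma_j' \to 0$ in both regimes.

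Part~\ref{lem:linear bounds2} then follows by passing to the limit inside the sandwich \eqref{eq:bounds} for a fixed equilibrium $x$ — note that $x$ remains an equilibrium as $\sigma_b$ varies, because $b(e^*)$ and $b'(e^*)=c$, and hence $\LCP(G)$ via Theorem~\ref{lem:NE and LCP}, are held fixed. As $\sigma_b \to 1$ we have $l \to d$, $u \to d$, $\sigma - \sigma' \to 0$ and $\bfe\t\sigma \to n$, so both ends of \eqref{eq:bounds} converge to $c\,d\t x$; the middle quantity $\WU(x) - nb(e^*) + ce^*\bfe\t\sigma$ is squeezed to $c\,d\t x$, and rearranging gives $\WU(x) \to n(b(e^*)-ce^*) + c\,d\t x$. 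As $\sigma_b \to 0$ we have $l, u \to -\bfe$, $\sigma \to 0$ and $\sigma-\sigma' \to 0$, so both ends converge to $-c\,\bfe\t x$ and $ce^*\bfe\t\sigma \to 0$, giving $\WU(x) \to nb(e^*) - c\,\bfe\t x$.

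The main obstacle is the bookkeeping in the upper bound of part~\ref{lem:linear bounds1}: the tangent at $d_i e^*$ introduces the awkward terms $b(d_i e^*)$ and $b'(d_i e^*)$, which must be rewritten through $\sigma_i$ and $\sigma_i'$ and whose leftover $x$-independent pieces must be gathered exactly into $ce^*\, d\t(\sigma-\sigma')$ without sign errors, all while respecting the degenerate $d_i=1$ agents. The only other delicate point is the lower bound on $\sigma_j'$ in part~\ref{thm:limits}: a one-sided monotonicity argument does not by itself force $\sigma_j' \to 1$ as $\sigma_b \to 1$, and one genuinely needs the two-sided secant/tangent comparison spanning the full interval $[e^*, ne^*]$.
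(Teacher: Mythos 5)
Your proposal is correct and follows essentially the same route as the paper: secant-below/tangent-above bounds on each $b(\scn{j}{x})$ over $[e^*,d_je^*]$, re-indexing via $a_{ij}=a_{ji}$ to get $l\t x$ and $u\t x$, squeezing $\sigma_j,\sigma_j'$ in both limits, and passing to the limit in the sandwich. The only (immaterial) difference is in forcing $\sigma_j'\to 1$: you compare the tangent at $d_je^*$ directly with the secant over $[d_je^*,ne^*]$, whereas the paper routes through $b'(d_je^*)\ge b'((n-1)e^*)$ and an auxiliary secant slope $\sigma^*$ over $[e^*,(n-1)e^*]$ — both are the same two-sided concavity estimate.
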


\begin{proof}
\begin{enumerate}[label = (\alph*)]
\item From Lemma from \ref{lem:basic results of LCP(G)}\ref{lem:bounds on C_j}, we know that for an equilibrium profile of efforts $x$, $e^* \leq \scn{i}{x} \leq d_j e^*$ for agents $j$ with $d_j>1$. Due to the concavity of the benefit function, the tangent at $d_j e^*$ always lies above the function whereas the secant between $(e^*,b(e^*))$ and $(d_j e^*,b(d_j e^*))$ always lies below the function for the interval $[e^*, d_j e^*]$ (see Figure \ref{fig:bounds using concavity}). Hence we have, 
\begin{align}\label{eq:welfare bound}
b(e^*) + c\sigma_j (\scn{j}{x} - e^*) \quad &\leq  \quad b(\scn{j}{x}) \quad  \leq  \quad  b(d_j e^*) - c\sigma_j' (d_j e^* - \scn{j}{x}) ,\\
\iff  \qquad\qquad c \sigma_j \scn{j}{x} \quad &\leq  \quad b(\scn{j}{x}) - b(e^*) + c\sigma_j e^*\quad  \leq  \quad  c\sigma_j ' \scn{j}{x} + ce^*d_j(\sigma_j - \sigma_j ')\label{eq:welfare bound2},
\end{align}
where the equivalence follows from the equation $b(d_j e^*) = b(e^*) + c \sigma_j e^* (d_j - 1)$ and subtracting $b(e^*)-c\sigma_je^*$ from all sides. For the case where $d_j = 1$, it can be seen from Lemma \ref{lem:basic results of LCP(G)}\ref{lem:bounds on C_j}, that $\scn{j}{x} = e^*$ whereby $b(\scn{j}{x}) = b(e^*)$, \ie, \eqref{eq:welfare bound} holds for agents who are dependants.
Now since $\WU(x) = \sum_{j \in V}\big(b(\scn{j}{x}) - cx_j\big)$, the sum of the inequalities in \eqref{eq:welfare bound2} for all agents $j$, gives \begin{align*}
c\sum_j (\sigma_j\scn{j}{x} - x_j) \leq \WU(x) -nb(e^*) + c e^* \bfe \t \sigma\leq c\sum_j(\sigma_j'\scn{j}{x} - x_j) + ce^*d_j(\sigma_j - \sigma_j').
\end{align*}
Observe that, \begin{align*}\sum_j\big( \sigma_j\scn{j}{x} - x_j \big)&:= \sum_j\big( \sigma_j(x_j + \sum_i a_{ij}x_i) - x_j\big) = \sum_j(\sigma_j -1)x_j + \sum_j \sigma_j \sum_i a_{ij} x_i \\
&\stackrel{(c)}{=} \sum_j(\sigma_j -1)x_j + \sum_i x_i \sum_j a_{ij}\sigma_j 
\stackrel{(d)}{=} \sum_j \big(\sigma_j -1 + \sum_i a_{ij}\sigma_i\big)x_j = l \t x,
\end{align*}
where the equality in $(c)$ is due to interchanging the order of summation and the equality in $(d)$ holds by exchange of summation indices $i$ and $j$. Following a similar argument,we get $\sum_j \sigma_j'\scn{j}{x} - x_j = u \t x$. The bounds on $\WU(x) - nb(e^*) + ce^*\bfe\t\sigma$ follow directly. Moreover, the bounds on the maximum equilibrium welfare $\WU^*$ in \eqref{eq:bound on WU*} and the maximum specialized equilibrium welfare $\WUSstar$ in \eqref{eq:bound on WUS*} follow after maximizing all quantities in \eqref{eq:bounds} over $\NE(G)$ and $\SNE(G)$ , respectively.

\item We first show the intermediate limits, $$\lim_{\sigma_b \rightarrow 1} \sigma_j = \lim_{\sigma_b \rightarrow 1} \sigma_j' = 1, \aur \lim_{\sigma_b \rightarrow 0} \sigma_j = \lim_{\sigma_b \rightarrow 0} \sigma_j' = 0,\qquad {\rm\ for\ all\ }j.$$ Since $l_j$, $u_j$ and $\sigma_j - \sigma_j'$ are linear functions of $\sigma$ and $\sigma'$, by the sum law of limits, showing the above limits is sufficient to prove the limits in the statement of the theorem.

\textbf{Case I}: ($d_j=1$) Clearly, by definition $\sigma_j=\sigma_j' = \sigma_b$, whereby the limits hold trivially for both $\sigma_b\rightarrow 0$ and $\sigma_b\rightarrow 1$.

\textbf{Case II}: ($d_j>1$) First consider the case $\sigma_b \rightarrow 1.$
To show the limit of $\sigma_j$, observe that $\sigma_j \geq \sigma_b$ since $b$ is increasing and concave. Moreover, by the concavity of $b$, the tangent at $e^*$ lies above the function, \ie,
\begin{equation}
b'(e^*) (d_j-1)e^* +b(e^*)\geq b(d_je^*). \label{eq:tgtineq} 
\end{equation}
Since $b'(e^*) = c$, rearranging, we have $\sigma_j \leq 1$. Now since $\sigma_j \geq \sigma_b,$ we get $\lim_{\sigma_b \rightarrow 1}\sigma_j = 1$. 

For computing $\lim_{\sigma_b\rightarrow 1}\sigma_j'$, let $\sigma^*$ denote the normalized slope of the secant between $(e^*,b(e^*))$ and $\big((n-1)e^*,b((n-1)e^*)\big)$. Observe that since $b$ is increasing and concave, we again have $1\geq \sigma^*\geq \sigma_b$, whereby we have $\lim_{\sigma_b \rightarrow 1}\sigma^*=1$.

Since $1 \leq d_j \leq n-1$ for any $j$, by the concavity of $b$, we have,
\begin{equation}
b'(e^*)\geq b'(d_je^*) \geq b'\big((n-1)e^*\big). \label{eq:decreasingder} 
\end{equation}
Moreover, an inequality similar to \eqref{eq:tgtineq} with the tangent to $b$ at $(n-1)e^*$ gives,
\begin{align*}
b'((n-1)e^*) \geq \frac{1}{e^*}(b(ne^*) - b((n-1)e^*))
= \frac{1}{e^*}(b(e^*) + e^*c(n-1)\sigma_b - (b(e^*) + e^*c(n-2)\sigma^*)),
\end{align*}
whereby $\sigma_j' \leq 1.$
 Hence combining with \eqref{eq:decreasingder} gives, $\lim_{\sigma_b \rightarrow 1}b'(d_je^*)\geq  c\lim_{\sigma_b \rightarrow 1} \big((n-1)\sigma_b-(n-2)\sigma^*)\big)=c$, whereby, $\lim_{\sigma_b\rightarrow 1}\sigma_j'  \geq 1$. Since $\sigma_j' \leq 1$,  $\lim_{\sigma_b\rightarrow 1}\sigma_j' = 1$.

To show the limits of $\sigma_j$ and $\sigma_j'$ as $\sigma_b\rightarrow 0$, for $d_j > 1$, we need the following set of inequalities, 
\begin{equation}\label{eq:Thm1_temp2}
0 \leq \sigma_j ' \leq\sigma_j \leq \sigma_b \left( \frac{n-1}{d_j -1}\right).
\end{equation}
Observe that the first and second inequality follow from the monotocity and concavity of $b$. The third inequality follows from $0 \leq b(n e^*) - b(d_j e^*)= b(n e^*) -b(e^*) - \big(b(d_j e^*)-b(e^*)\big)= \sigma_b(n-1)e^* - \sigma_j(d_j - 1)e^*$. Hence $\lim_{\sigma_b \rightarrow 0} \sigma_j = \lim_{\sigma_b \rightarrow 0} \sigma_j' = 0$. 

\item 
 Applying limits to the result in part \ref{lem:linear bounds1}, we have that $$\lim_{\sigma_b\rightarrow 1}c l \t x \leq \lim_{\sigma_b\rightarrow 1}\WU(x) - nb(e^*) + ce^*\lim_{\sigma_b \rightarrow 1}\sum_{j}\sigma_j \leq \lim_{\sigma_b\rightarrow 1} c u \t x + \lim_{\sigma_b\rightarrow 1}ce^*d\t(\sigma-\sigma'),$$ and $$\lim_{\sigma_b\rightarrow 0}c l \t x \leq \lim_{\sigma_b\rightarrow 0}\WU(x) - nb(e^*) + ce^*\lim_{\sigma_b \rightarrow 0}\sum_{j}\sigma_j \leq \lim_{\sigma_b\rightarrow 0} c u \t x + \lim_{\sigma_b\rightarrow 0}ce^*d\t(\sigma-\sigma').$$ Applying the limits from part \ref{thm:limits} proves the result.

 \end{enumerate}
\end{proof}

\begin{figure}
\center
\includegraphics[width=0.6\textwidth]{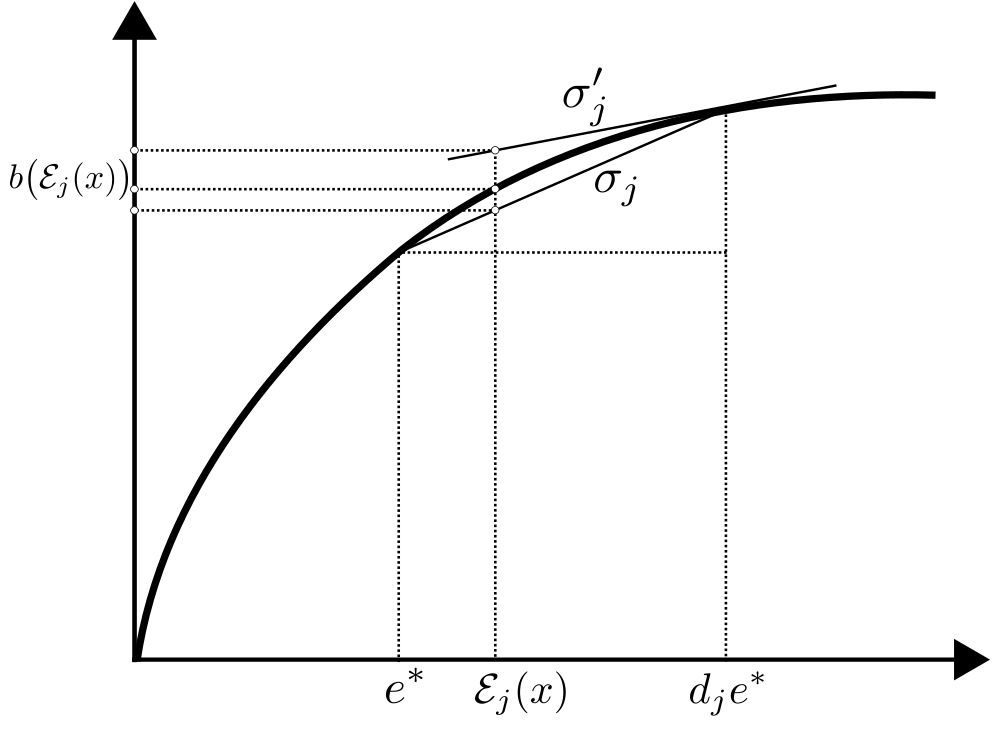}
\caption{\label{fig:bounds using concavity} Bounds on $b(\scn{i}{x})$ for $j$ such that $d_j > 1$ as given by the inequality in \eqref{eq:welfare bound}. The tangent and secant have slopes $c \sigma_j '$ and $c \sigma_j$ respectively. Due to concavity of the function $b$, the tangent always lies above the function and the secant always lies below the function for the interval $[e^*,d_je^*]$. Hence $b(e^*) +c \sigma_j \big(\scn{j}{x} - e^*\big) $ and $b(d_je^*) - c\sigma'_j \big(d_j e^* - \scn{j}{x}\big)$ are the bounds. In the case where $d_j = 1$, both quantities become $b(e^*)$}
\end{figure}

\subsection*{Proof of Theorem \ref{thm:main}}

\begin{enumerate}[label=(\alph*)]
\item Let $x^*$ be the specialized equilibrium as mentioned in the statement of the theorem, whereby $x^* = e^*\bfone_S$, using Theorem \ref{lem:NE and LCP}.
The limiting value of the welfare at this equilibrium, using Theorem \ref{lem:linear bounds}\ref{lem:linear bounds2} is, 
\begin{equation}\label{eq:thm1(a)temp1}
\lim_{\sigma_b \rightarrow 1}\WU(x^*) = n(b(e^*) - ce^*) + ce^*\sum_{j\in S}d_j = n(b(e^*) - ce^*) + ce^*\alpha_d(G),
\end{equation}
since $S$ is a $d$-weighted maximum independent set in the network. We now calculate the limiting value of $\WU^*$ using \eqref{eq:bound on WU*}, \ie, $$ \lim_{\sigma_b \rightarrow 1}c\theta_l \leq \lim_{\sigma_b \rightarrow 1} \WU^* - nb(e^*) + ce^* \bfe \t \sigma \leq \lim_{\sigma_b \rightarrow 1} c\theta_u .$$ However, as a consequence of Theorem \ref{lem:linear bounds}\ref{thm:limits}, $\lim_{\sigma_b \rightarrow 1}l = \lim_{\sigma_b \rightarrow 1}u = d$, whereby using Lemma \ref{lem:continuity of theta}, $\lim_{\sigma_b \rightarrow 1}\theta_l=\lim_{\sigma_b \rightarrow 1}\theta_u=\theta_d$. Moreover,
since $d\geq 0$, $\theta_d = e^*\alpha_d(G)$ by Theorem \ref{thm:old results}\ref{thm:wTx}. Hence 
\begin{equation}
\label{eq:thm1(a)temp2}
\lim_{\sigma_b \rightarrow 1}\WU^* = 
n\big(b(e^*) - ce^*\big) + ce^*\alpha_d(G).
\end{equation}
Hence $\lim_{\sigma_b \rightarrow 1} \WU^*=\lim_{\sigma_b \rightarrow 1}\WU(x^*).$
Now, since $\WU^* \geq \WUSstar \geq \WU(x^*)$, taking limits on all three quantities gives
$$\lim_{\sigma_b \rightarrow 1}\WU^* \geq \lim_{\sigma_b \rightarrow 1}\WUSstar \geq \lim_{\sigma_b \rightarrow 1}\WU(x^*)$$
Since the limiting values of $\WU^*$ and $\WU(x^*)$ are the same, the above inequality proves the result.

\item Consider a forest $G$ and let $x'$ be the specialized equilibrium as mentioned in the statement of the theorem, whereby $x' = e^*\bfone_{S'}$, using Theorem \ref{lem:NE and LCP}.
The limiting value of the welfare at this equilibrium, using Theorem \ref{lem:linear bounds}\ref{lem:linear bounds2} is, 
\begin{equation}\label{eq:thm1(b)temp1}
\lim_{\sigma_b \rightarrow 0}\WU(x') = nb(e^*) - ce^*\sum_{j\in S'}1 = nb(e^*) - ce^*\beta(G),
\end{equation}
since $S'$ is the smallest maximal independent set in the network, \ie, cardinality of $S'$ is $\beta(G)$. Now, to calculate the limiting value of $\WU^*$, applying limits to \eqref{eq:bound on WU*} gives, $$ \lim_{\sigma_b \rightarrow 0}c\theta_l \leq \lim_{\sigma_b \rightarrow 0} \WU^* - nb(e^*) + ce^* \bfe \t \sigma \leq \lim_{\sigma_b \rightarrow 0} c\theta_u.$$ However, from Theorem \ref{lem:linear bounds}\ref{thm:limits} $\lim_{\sigma_b \rightarrow 0}l=\lim_{\sigma_b \rightarrow 0}u=-\bfe$, whereby using Lemma \ref{lem:continuity of theta}, we have $\lim_{\sigma_b \rightarrow 0}\theta_l=\lim_{\sigma_b \rightarrow 0}\theta_u=\theta_{-\bfe}$. 
Now observe that, $$-\theta_{-\bfe}=  \min \{ \bfe \t x \mid x \in \NE(G)\} = e^*\min \{ \bfe \t y \mid y \solves \LCP(G)\}.$$ 
Moreover, since $G$ is a forest, we know from Theorem \ref{thm:old results}\ref{thm:beta equality}, that $\min \{ \bfe \t x \mid x \solves \LCP(G)\} = \beta(G)$. Hence
\begin{equation}
\label{eq:thm1(b)temp2}
\lim_{\sigma_b \rightarrow 0}\WU^* = 
nb(e^*) - ce^*\beta(G),
\end{equation}
whereby $\lim_{\sigma_b \rightarrow 0} \WU^* = \lim_{\sigma_b \rightarrow 0} \WU(x').$
Since $\WU \geq \WUSstar \geq \WU(x')$, taking limits on all three quantities and arguing as in part (a) gives the result.

\item   Let the network $G$ admit a distributed equilibrium $x^*$, \ie, $x_i^*>0$ for all $i$. It follows from \eqref{eq:equilibrium efforts} that for any $x^*$, $\scn{i}{x} = e^*$ for all agents $i$. Hence $\WU(x^*) = nb(e^*) - c \bfe \t x^*$, which is independent of variation in $\sigma_b$ (recall that keeping $b(e^*)$ and $b'(e^*)$ fixed as $\sigma_b$ varies ensures that the equilibria are unaltered).

Let $S$ be a maximal independent set of $G$, whereby $e^*\bfone_S$ is a specialized equilibrium. From Theorem \ref{thm:total effort}\ref{thm:total cost}, the cost  $c\bfe\t x^*$ incurred by the distributed equilirium is at most as much as the cost $ce^*|S|$ incurred by the specialized equilibrium $e^*\bfone_S$, \ie, $ce^*|S| \geq c\bfe \t x^*$ or equivalently, $$\WU(x^*) \geq nb(e^*) -ce^*|S|.$$
 Observe that this holds true for any $x^* \in \DNE(G)$ and any maximal independent set $S$, \ie, the variable $S$ in the RHS is independent of the variable $x^*$ in the LHS. Hence,  we may infimize over $x^*$ and maximize over $S$, leading to,
\begin{align}
\label{eq:thm1(c)temp1}
\inf \{\WU(x^*)\mid x^* \in \DNE(G)\} &\geq \max \{nb(e^*)-ce^*|S| \mid e^*\bfone_S \in \SNE(G)\},\nonumber\\
\WUDstar &\geq  nb(e^*) - ce^*\min\{|S|\mid S \ {\rm is\ maximal\ independent}\},\nonumber\\
&=nb(e^*) - ce^*\beta(G).
\end{align}

The limiting value of $\WUSstar$ using \eqref{eq:bound on WUS*} is,
$$ \lim_{\sigma_b \rightarrow 0}c\theta_l^{\mathsf{S}} \leq \lim_{\sigma_b \rightarrow 0} \WUSstar - nb(e^*) + ce^* \bfe \t \sigma \leq \lim_{\sigma_b \rightarrow 0} c\theta_u^{\mathsf{S}}.$$ Now, as in part (b), $\lim_{\sigma_b \rightarrow 0}l=\lim_{\sigma_b \rightarrow 0}u=-\bfe$,  whereby $\lim_{\sigma_b \rightarrow 0} \WUSstar = nb(e^*) + c\theta_{-\bfe}^{\mathsf{S}} $. 
Moreover, again as in part (b), 
$$-\theta_{-\bfe}^{\mathsf{S}}=  \min \{ \bfe \t x \mid x \in \SNE(G)\} = e^*\min \{ |S| \mid S {\rm \ is\ maximal\ independent}\} = e^*\beta(G),$$ since $\beta(G)$ is the cardinality of the smallest maximal independent set of $G$.
 Hence,
\begin{equation}
\label{eq:thm1(c)temp2}
\lim_{\sigma_b \rightarrow 0}\WUSstar = nb(e^*) - ce^*\beta(G).
\end{equation}

The statement of the theorem follows from \eqref{eq:thm1(c)temp1} and \eqref{eq:thm1(c)temp2}.

\end{enumerate}

\section{Welfare and cost in well-covered forest networks}

In this section we prove Theorem \ref{thm:well-covered}. A network $G$ is said to be well-covered if all its maximal independent sets have the same cardinality, \ie, $\alpha(G) = \beta(G)$. Figure \ref{fig:well_covered} shows example networks which are well-covered.

Before proceeding to the proof, we show a more general result regarding the role of dependants on the efficiency of the equilibrium. This result is put to use while proving Theorem \ref{thm:well-covered}\ref{thm:well-covered welfare}.

\begin{figure}
\centering
\includegraphics[width=0.4\textwidth]{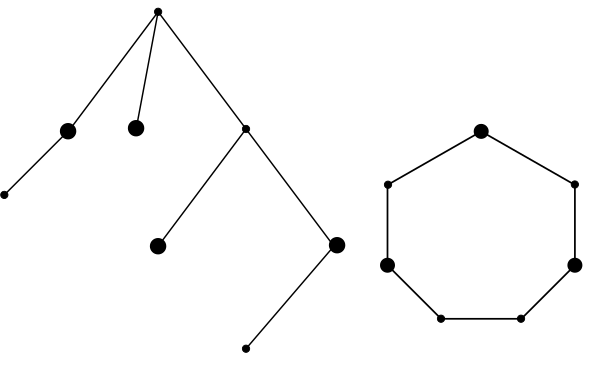}
\caption{\label{fig:well_covered}Examples of well-covered networks. The vertices marked by larger circles form maximal independent sets. All maximal independent sets in the well-covered tree network consist of 4 agents while those in the cycle network consist of 3 agents respectively. Observe that the well-covered tree over 8 agents depicted in the figure has exactly 4 guardian-dependant pairs such that exactly one of the two is part of any maximal independent set.}
\end{figure}

\begin{theorem}
\label{thm:dependant_efficiency}
{\bf (Co-specialist pendant lines are inefficient)}\\
In a public goods game, if a guardian has a single dependant, then in an equilibrium yielding maximum equilibrium welfare, exactly one of the guardian-dependant pair specializes whereby the other free rides.
\end{theorem}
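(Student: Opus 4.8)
The plan is to argue by contradiction using an effort-transfer perturbation that preserves the equilibrium property while strictly raising welfare. Let $i$ be the guardian and $j$ its unique dependant, and let $x$ be a welfare-maximizing equilibrium. By Table~\ref{tab:dependants} (Row~1) the dependant $j$ is a specialist, a free rider, or a co-specialist of $i$. In the first two cases the conclusion is immediate from \eqref{eq:equilibrium efforts}: if $x_j = 0$ then $\scn{j}{x} = x_i \geq e^*$ forces $x_i = e^*$, so $i$ specializes; and if $x_j = e^*$ then Lemma~\ref{lem:basic results of LCP(G)}\ref{lem:specialist IS} forces $x_i = 0$, so $i$ free rides. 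Hence the entire content of the theorem is to rule out the co-specialist case.

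So suppose, for contradiction, that $x_i, x_j > 0$ with $x_i + x_j = e^*$. Since $d_j = 1$, Lemma~\ref{lem:basic results of LCP(G)}\ref{lem:co-specialists matching} shows that every neighbour of $i$ other than $j$ is a free rider. I would then define a new profile $x'$ by transferring all of $j$'s effort to $i$: set $x_i' = e^*$ and $x_j' = 0$, leaving every other effort unchanged. The key step is to verify $x' \in \NE(G)$ using \eqref{eq:equilibrium efforts}. Here $\scn{i}{x'} = e^*$ and $\scn{j}{x'} = x_i' = e^*$, so $i$ (now a specialist) and $j$ (now a free rider) both satisfy their conditions; and for each neighbour $k \neq j$ of $i$ we have $\scn{k}{x'} = \scn{k}{x} + x_j \geq \scn{k}{x} \geq e^*$, so these agents remain admissible free riders. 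All other closed-neighbourhood sums are untouched, so $x'$ is indeed an equilibrium.

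It then remains to compare welfare. Since $x_i' + x_j' = e^* = x_i + x_j$, the total cost $c\,\bfe\t x$ is unchanged, and the closed-neighbourhood efforts of $i$ and $j$ stay at $e^*$, so their benefit terms are unchanged. For every neighbour $k \neq j$ of $i$, however, the argument $\scn{k}{x}$ strictly increases by $x_j > 0$, so $b(\scn{k}{x'}) > b(\scn{k}{x})$ by strict monotonicity of $b$. Summing over $V$ gives $\WU(x') > \WU(x)$, contradicting the optimality of $x$; combined with the first paragraph, exactly one of the pair specializes while the other free rides. The main obstacle --- and the only delicate point --- is that this strict gain needs $i$ to possess a neighbour besides $j$, \ie $d_i \geq 2$; the lone degenerate case is a co-dependant isolated link, for which the welfare $2b(e^*) - ce^*$ is independent of the split and the perturbation argument is vacuous. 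The real work is entirely in exhibiting the equilibrium-preserving transfer, and choosing to zero out $x_j$ outright (rather than shifting by an infinitesimal $\delta$) is what makes both the feasibility check and the strict welfare comparison clean.
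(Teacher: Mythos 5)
Your proof is correct and follows essentially the same route as the paper's: both arguments transfer the dependant's entire effort to the guardian, verify via \eqref{eq:equilibrium efforts} that the resulting profile is still an equilibrium, and note that the total cost and the benefits of the pair are unchanged while the guardian's other neighbours see their closed-neighbourhood effort strictly increase by the dependant's former effort. The degenerate co-dependant (isolated-link) case you flag is precisely the caveat the paper handles by explicitly assuming the pair are not co-dependants.
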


\begin{proof} Recall from Table \ref{tab:dependants} (Row 1, Column 1), that if a guardian has a single dependant, then either one of the guardian or dependant free rides or they both form a co-specialist pair. Note that an equilibrium where the guardian-dependant pair form a co-specialist link is not possible if the guardian has multiple dependants (See Table \ref{tab:dependants2} (Row 2, Column 3)). We have to show that if the guardian-dependant pair form a co-specialist link, then the equilibrium yields suboptimal welfare. We affirm this by showing the existence of another equilibrium which yields higher welfare.

Let agent $i$ be the only dependant of agent $j$ such that they are not co-dependants. Let $x$ be an equilibrium such that $i$ and $j$ form a co-specialist link, \ie, $x_i + x_j = e^*$. From Lemma \ref{lem:basic results of LCP(G)}\ref{lem:co-specialists matching}, we can infer that all the neighbours of $j$ other than $i$ free ride in the equilibrium $x$, \ie, \begin{equation}\label{eq:temp6.3.0}x_k = 0, \qquad \forall\ k \in N_j\backslash\{i\}.\end{equation} Now, consider the profile of efforts $y$, where $y_j = e^*$, $y_i = 0$ and for all other agents $k$ exert the same effort $y_k = x_k$. We first show $y \in {\rm NE}(G)$ and then prove that indeed $W_U(y) > W_U(x)$.

Observe that for all agents $k$ other than $j$ and its neighbours, the effort of closed neighbourhood remains unchanged, \ie, $y_k = x_k, \aur \scn{k}{y}=\scn{k}{x}, $ whereby,\begin{equation}\label{eq:temp6.3.1}
 y_k\geq0,\ \scn{k}{y}\geq e^*,\ y_k(\scn{k}{y}- e^*)=0,\qquad \forall\ k \notin \{j\}\cup N_j,
\end{equation} from \eqref{eq:equilibrium efforts} since $x$ is an equilibrium. 
Now, for agent $j$, $$y_j=e^*,\aur \scn{j}{y} = y_j + y_i + \sum_{k \in N_j\backslash\{i\}}y_k = e^*+0+\sum_{k \in N_j\backslash\{i\}}x_k=e^*,$$ by \eqref{eq:temp6.3.0}. Moreover we have that, $$y_k = x_k = 0, \aur \scn{k}{y} =y_j + \sum_{r \in N_k\backslash\{j\}}y_r\geq e^*,\qquad \forall\ k\in N_j\backslash\{i\},$$ since $y_r \geq 0$ for all $r$, and for agent $i$, $$y_i=0,\aur\scn{i}{y}=y_i+y_j=e^*.$$ 
Hence $y$ satisfies all conditions in \eqref{eq:equilibrium efforts} whereby $y$ is an equilibrium profile. Moreover, since $y_j = e^* =x_j + x_i $,
\begin{equation}\label{eq:temp6.3.2}
\scn{k}{y} = y_j + \sum_{r\in N_k\backslash \{j\}} y_r = x_i + x_j + \sum_{r\in N_k\backslash \{j\}} x_r = \scn{k}{x} + x_i, \qquad \forall\ k \in N_j \backslash \{i\}.
\end{equation} For agent $i$, since $d_i = 1$, and both $x$ and $y$ are equilibrium profiles, by Lemma \ref{lem:basic results of LCP(G)}\ref{lem:bounds on C_j}, $\scn{i}{y}=\scn{i}{x}=e^*$. Similarly since $y_j>0$ and $x_j>0$, $\scn{j}{y}=\scn{j}{x}=e^*$, due to \eqref{eq:equilibrium efforts}. 

To compare the cost of the equilibria $y$ and $x$, observe that, $$\bfe \t y = y_i + y_j +\sum_{k \neq i,j}y_k = e^*+\sum_{k \neq i,j}x_k= x_i + x_j + \sum_{k\neq i,j}x_k = \bfe \t x,$$ \ie, the total effort required by both equilibria $y$ and $x$ is the same.  Hence we have, 
\begin{align*}
\WU(y)-\WU(x) =\ & b(\scn{i}{y}) - b(\scn{i}{x}) + b(\scn{j}{y}) - b(\scn{j}{x}) + \\ &\sum_{k\in N_j\backslash\{i\}} (b(\scn{k}{y}) - b(\scn{k}{x})) + \sum_{k\notin \{j\}\cup N_j}b(\scn{k}{y}) - b(\scn{k}{x}) \\
\stackrel{(e)}{=}\ &0 + 0 + \sum_{k\in N_j\backslash\{i\}} (b(\scn{k}{y}) - b(\scn{k}{x}))  + 0 >0,
\end{align*}
where $(e)$ follows since $\scn{k}{y}=\scn{k}{x}$ for $k\notin \{j\}\cup N_j$, $\scn{i}{y}=\scn{i}{x}$ and $\scn{j}{y}=\scn{j}{x}.$ The inequality follows from \eqref{eq:temp6.3.2} and the monotonocity of $b$.  This proves the claim.
\end{proof}

The above theorem says that for a network with dependants, in a welfare maximizing equilibrium, a pendant line cannot exist as a co-specialist link. However, in general, a similar substitution of efforts amongst agents in a co-specialist pair  does not lead to higher welfare, as demonstrated in Example \ref{ex:tree}. The example shows an equilibrium where substitution of effort between co-specialist agents, which do not form a dependant-guardian pair, yields lower welfare.

\begin{examplec}\label{ex:tree}{\bf(In general, substitution among co-specialists is inefficient)}\\ 
In Figure \ref{fig:trees}, observe that in equilibrium $(B)$ the agent on the top in the vertically oriented link substitutes the effort of its co-specialist agent in equilibrium $(A)$.

Let us now calculate the difference in the welfares of both the equilibria. By Lemma \ref{lem:trees}\ref{lem:trees2}, since $(B)$ is a specialized equilibrium contained in the support of equilibrium $(A)$, the cost of both equilibria is the same, whereby the difference in the welfares of $(A)$ and $(B)$ is, 
\[\WU(A) - \WU(B) = 2\big(b(1+x) + b(2-x) - b(2) - b(1)\big) = 2x\Big(\frac{\big( b(1+x) - b(1)\big)}{x} - \frac{\big(b(2) - b(2-x)\big)}{x} \Big)>0 .\] 
The inequality above holds due to the identical lengths of intervals $[1,1+x]$ and $[2-x,2]$ and due to the concavity of the function $b$, by which the secant to $b$ between $(1,b(1))$ and $(x,b(x))$ has a higher slope than the secant between $(2-x,b(2-x))$ and $(2,b(2))$. Hence $(A)$ yields higher welfare than $(B)$.
\end{examplec}

In proving Theorem \ref{thm:well-covered}, we need another result by Plummer et. al. \cite{plummer1993well} which characterizes well-covered forest networks.

\begin{theorem}[Plummer et. al. \cite{plummer1993well} Cor. 3.2] \label{thm:plummer}
Let $G$ be a forest network over $n$ non-isolated vertices. $G$ is well-covered, if and only if, the pendant lines of $G$ form a perfect matching. As a consequence $\alpha(G)=\beta(G)=\half n$.
\end{theorem}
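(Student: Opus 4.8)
The plan is to reduce to a single tree and then prove the two implications separately. A forest is a disjoint union of trees, and both ``well-covered'' and ``the pendant lines form a perfect matching'' hold for the forest if and only if they hold for each tree component, since a maximal independent set of the forest is the disjoint union of maximal independent sets of its components (so $\alpha$ and $\beta$ add up over components). Throughout I use that every tree with at least two vertices has a leaf, i.e. a dependant. For the easy direction, suppose the pendant lines form a perfect matching $M$, so $n$ is even and $|M|=\half n$. I would show that for any maximal independent set $S$ and any pendant line $(u,v)\in M$ with $u$ the dependant, \emph{exactly} one of $u,v$ lies in $S$: at most one by independence, and at least one because if $u\notin S$ then, $v$ being the only neighbour of $u$, maximality forces $v\in S$. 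Since $M$ partitions $V$, summing over the $\half n$ pendant lines gives $|S|=\half n$ for \emph{every} maximal independent set, so $G$ is well-covered with $\alpha(G)=\beta(G)=\half n$.

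For the converse I would induct on $|V|$, showing a well-covered tree $T$ without isolated vertices has its pendant lines forming a perfect matching; the base case $|V|=2$ is a single pendant line. For the step, take a longest path and let $u$ be an endpoint (a dependant) with guardian $v$; maximality of the path forces every neighbour of $v$ other than the next path vertex $p_2$ to be a leaf. The first key step (\textbf{Claim 1}) is that $v$ has exactly one leaf neighbour, so $\deg(v)=2$ and $N_v=\{u,p_2\}$. I would prove this by letting $L$ be the leaf-neighbours of $v$ and comparing the two maximal independent sets $\{v\}\cup R$ (with $R$ maximal independent in $T-N[v]$) and $L\cup Q$ (with $Q$ maximal independent in $T-(\{v\}\cup L)=(T-N[v])+p_2$): well-coveredness forces $\alpha(T-N[v])=s-1$ and $\alpha((T-N[v])+p_2)=s-|L|$, where $s=\alpha(T)$, and since adjoining the single vertex $p_2$ cannot decrease the independence number, $s-|L|\ge s-1$, i.e. $|L|\le 1$.

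The crux (\textbf{Claim 2}) is that the surviving neighbour $p_2$ is itself a guardian, i.e. has a leaf neighbour; this is exactly what stops the deletion of $\{u,v\}$ from exposing a ``bad'' degree-two vertex (the failing configuration is a long path such as $P_6$, which is not well-covered). Assuming $p_2$ has no leaf neighbour, the longest-path/depth estimate forces every neighbour of $p_2$ to be a degree-two guardian $w_i$ with a unique leaf $\ell_i$ (each such $w_i$ is the second vertex of some longest path, so Claim 1 applies to it). Writing $N_{p_2}=\{v,w_1,\dots,w_t\}$ with $t\ge1$, I would then exhibit two maximal independent sets differing in size by one over the same remainder $T_0:=T-N[\{v,w_1,\dots,w_t\}]$: the set $\{v,w_1,\dots,w_t\}\cup R_0$ of size $(t+1)+\alpha(T_0)$, and the set $\{u,\ell_1,\dots,\ell_t,p_2\}\cup R_0'$ of size $(t+2)+\alpha(T_0)$. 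This contradicts well-coveredness, establishing Claim 2.

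With Claim 2 in hand, deleting $\{u,v\}$ yields a forest $T'$ which I would show is again well-covered with no isolated vertices: its maximal independent sets are precisely $S\setminus\{u\}$ for maximal independent sets $S$ of $T$ with $u\in S$ (a bijection given by adjoining $u$), all of size $s-1$, and the only vertex whose degree drops is $p_2$, which remains non-isolated once the trivial case $T=P_3$ is excluded by well-coveredness. By the induction hypothesis the pendant lines of $T'$ form a perfect matching $M'$, and Claim 2 guarantees that the $M'$-edge covering $p_2$ — like every other edge of $M'$ — is still a pendant line of $T$ (its dependant endpoint keeps degree one in $T$). Hence $M'\cup\{(u,v)\}$ is the sought perfect matching of pendant lines of $T$, and $\alpha(T)=\beta(T)=\half n$ follows since a perfect matching has $\half n$ edges. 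I expect Claim 2 to be the main obstacle: it is the step that must convert the global hypothesis $\alpha=\beta$ into the precise local pendant structure and simultaneously rule out the configurations that would otherwise break the deletion argument.
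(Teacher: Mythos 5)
The paper never proves this statement --- it is imported verbatim from Plummer's survey --- so your attempt has to stand on its own, and most of it does: the easy direction, Claim 1, and the induction bookkeeping (well-coveredness of $T'=T-\{u,v\}$ via the bijection $S\mapsto S\setminus\{u\}$ on maximal independent sets containing $u$, and the fact that every edge of $M'$ not touching $p_2$ stays pendant in $T$) are all sound. The genuine gap is in Claim 2. You assert that if $p_2$ has no leaf neighbour then \emph{every} neighbour of $p_2$ other than $v$ is a degree-two guardian with a unique leaf, ``each such $w_i$ being the second vertex of some longest path''. That is correct for a neighbour $w$ whose branch away from $p_2$ does not contain the continuation of the chosen longest path: the depth of that branch is forced to be at most one, so $w$ is a guardian and Claim 1 applies to a longest path ending at one of its leaves. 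It is false for the neighbour $p_3$ that carries the rest of the longest path: nothing bounds the depth below $p_3$, and $p_3$ need not be adjacent to any leaf at all (in $P_6 = u\,v\,p_2\,p_3\,p_4\,p_5$, the vertex $p_3$ has no leaf neighbour and is not the second vertex of any longest path). So the decomposition $N_{p_2}=\{v,w_1,\dots,w_t\}$ with all $w_i$ degree-two guardians is unjustified, and this is not cosmetic: treating $p_3$ honestly, your first set $\{v,w_1,\dots,w_t\}\cup R_0$ lives over the remainder $T_0$, while the second set $\{u,\ell_1,\dots,\ell_t,p_2\}\cup R_0'$ lives over $T_0-\{p_3\}$ (adding $p_2$ to the set removes $p_3$ from the remainder), and the resulting relation $\alpha(T_0-p_3)=\alpha(T_0)-1$ is perfectly consistent with well-coveredness, since deleting one vertex can lower the independence number by one. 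No contradiction is produced, so Claim 2 --- the step you yourself identify as the crux --- remains unproved.

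A standard way to close the gap sidesteps Claim 2 entirely. Since $T$ has no isolated vertices, both colour classes of its bipartition are dominating independent sets, hence maximal independent sets; well-coveredness then forces both to have size $\alpha(T)$, so $\alpha(T)=\tfrac{1}{2}n$, and K\"onig's theorem gives a perfect matching $M$ with $|M|=n-\alpha(T)=\tfrac{1}{2}n$. If some $(x,y)\in M$ had $\deg(x)\ge 2$ and $\deg(y)\ge 2$, pick $x'\in N(x)\setminus\{y\}$ and $y'\in N(y)\setminus\{x\}$; these are non-adjacent (a tree has no $4$-cycle), so $\{x',y'\}$ extends to a maximal independent set that misses both $x$ and $y$ and therefore meets the pair $\{x,y\}$ in zero vertices and every other $M$-pair in at most one, giving size at most $\tfrac{1}{2}n-1<\alpha(T)$ --- contradicting well-coveredness. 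Hence every edge of $M$ is a pendant line, and since any perfect matching must use the unique edge at each leaf, $M$ is exactly the set of pendant lines. You could either adopt this argument or repair Claim 2 along these lines; as written, the induction does not go through.
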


Note that the above theorem means that for a forest without isolated vertices to be well-covered, $(i)$ every agent is either  a dependant or a guardian and $(ii)$ every guardian has exactly one dependant. Hence the set of agents consists of $\frac{n}{2}$ disjoint guardian-dependant pairs. This can be seen in Figure \ref{fig:well_covered}. We will use this result to prove Theorem \ref{thm:well-covered}. Recall that we have to show that if a network of agents is a well-covered forest, a welfare maximizing equilibrium is necessarily specialized.

\subsection*{Proof of Theorem \ref{thm:well-covered}}

\begin{enumerate}[label=(\alph*)]
\item If there exist isolated agents, they all exert effort $e^*$, \ie, they are specialists. In the rest of the network without isolated agents, from Theorem \ref{thm:dependant_efficiency} we know that if an equilibrium in a forest network maximizes welfare, then in a dependant-guardian pair, exactly one agent is a specialist whereby the other is a free rider. However, by Theorem \ref{thm:plummer}, all the non-isolated agents in a well-covered forest network exist as guardian-dependant pairs. Hence, all agents in a welfare maximizing equilibrium are either specialists or free riders, \ie, the equilibrium is necessarily specialized. This proves part \ref{thm:well-covered welfare}.

\item Let $x$ be an equilibrium of the well-covered forest. We know from Theorem \ref{lem:NE and LCP} that $\frac{1}{e^*}x$ solves $\LCP(G)$. Since $G$ is well-covered, $\alpha(G) = \beta(G)$. Also, since $G$ is a forest $\min \{\bfe \t x \mid x \solves \LCP(G)\} = \beta(G)$ from Theorem \ref{thm:old results}\ref{thm:beta equality}. Thus we have from Theorem \ref{thm:old results}\ref{thm:wTx} that $\max \{\bfe \t x \mid x \solves \LCP(G)\}=\min \{\bfe \t x \mid x \solves \LCP(G)\}$, whereby $\bfe \t x$ is constant for all equilibria $x$, and takes the value $e^*\alpha(G)$. Moreover, from Theorem~\ref{thm:plummer}, $\alpha(G) = \half n$ for well-covered forests $G$ without isolated agents. Hence the cost of every equilibrium is $c \bfe\t x = c e^* \alpha(G) = \half ce^* n$. This proves the claim. If isolated agents are present, they exert effort $e^*$ and hence the cost incurred by each isolated agent is $ce^*$.
\end{enumerate}

\section{Conclusion}

This paper contributes to the theory of public goods provision based on the model introduced by Bramoulle and Kranton \cite{bramoulle2007public}. Our main results relate the structure of the network to its equilibria and show that specialized equilibria enjoy many attractive properties. Specifically, we showed there exists a specialized equilibrium whose welfare comes arbitrarily close to the maximum equilibrium welfare as the concavity of the benefit function approaches unity, while for forest networks this holds true even as the concavity approaches zero. More generally, for any network, there exists a specialized equilibrium which maximizes total weighted equilibrium effort exerted by the agents. On forest networks they minimize the total cost amongst all equilibria. This makes the case that the specialized equilibrium may be considered as a refinement of the Nash equilibrium of a public goods game.

We further study the Nash equilibria of the public goods game in networks with dependants, \ie, agents adjacent to exactly one other agent, and show that they have a special structure, due to which there always exists a free riding agent in every equilibrium. 
We also study a class of networks called well-covered forests which possess the property that any welfare maximizing equilibrium is necessarily specialized, and that every equilibrium in a public goods game over such a network incurs the same cost.

\section*{Appendix}
\begin{proof}[of Lemma \ref{lem:basic results of LCP(G)}] 
\begin{enumerate}[label=(\alph*)]

\item Recall that a set $S\subseteq V$ is said to be dominating if any vertex not in $S$ has at least one neighbour in $S$. From \eqref{eq:equilibrium efforts}, we know that for an equilibrium $x$, we have $\scn{i}{x} \geq e^*$ for all agents $i$. An agent $i$ who is not a supporting agent is a free rider, \ie $x_i = 0$, whereby $\scn{i}{x}:=x_i + \sum_{j \in N_i}x_j = \sum_{j \in N_i}x_j \geq e^*$, where $N_i$ is the set of agents adjacent to $i$. Since all $x_j \geq 0$, there is at least one neighbour of $i$ who exerts positive effort, \ie, who is a supporting agent. This proves the claim.

\item Let $x$ be an equilibrium effort profile and let agent $i^*$ be a free rider, \ie, $x_{i^*} = 0$. At equilibrium for every agent $i$, $x_i \geq 0$, $\scn{i}{x} \geq e^*$, and at least one inequality is tight. Observe that since $i^*$ is a free rider at equilibrium, it doesn't contribute to the benefit of any agent in the network and hence the effort of the closed neighbourhood $\scn{i}{x}$ for all agents other than $i^*$ remains the same if $i^*$ left the network. Thus if $y:=x_{(-i^*)}$, then for all $i \neq i^*$, $y_i=x_i$ and $\scn{i}{y} = \scn{i}{x}$. Thus $y_i \geq 0$, $\scn{i}{y} \geq e^*$ and at least one inequality is tight whereby $y$ is an equilibrium effort profile, which proves the claim.

\item If agent $i$ is a specialist in an equilibrium with effort profile $x$, \ie, $x_i = e^*$, and if it's neighbour $j$ exerts an effort $x_j > 0$, then $\scn{i}{x}\geq x_i + x_j>e^*$, which contradicts the third condition in \eqref{eq:equilibrium efforts}. Hence $x_j = 0$ for all neighbours $j$ of $i$, whereby any two agents who are specialists in an equilibrium cannot be adjacent. Hence the specialists form an independent set. Note that this need not be a maximal independent set.

\item If agent $i$ and $j$ form a co-specialist pair in equilibrium with effort profile $x$, \ie, $x_i  + x_j = e^*$. If a neighbour $k$ of agent $i$ exerts positive effort $x_k >0$, then $\scn{i}{x} \geq x_i + x_j + x_k > e^*$, which contradicts \eqref{eq:equilibrium efforts}.  Hence $x_k=0$ for all neighbours of $i$ and $j$ which form a co-specialist pair. This means that for any two pairs of co-specialists in an equilibrium, there is no agent common to the both pairs, whereby the co-specialist links form a matching of the network.

\item Assume $d_i>0$ for an agent $i$. If $i$ is a free rider, observe that $\scn{i}{x} := x_i + \sum_{j\in N_i} x_j  \leq |N_i|e^* = d_i e^*$, where equality is attained only if all neighbours $j \in N_i$ are specialists, which proves the third inequality. On the other hand if $i$ is not a free rider,then by \eqref{eq:equilibrium efforts} $\scn{i}{x} = e^*$, whereby the second inequality in \eqref{lem:inequalities in efforts} and its equality condition are proved. Finally, the first inequality in \eqref{lem:inequalities in efforts} is tight if $i$ is a specialist, which follows directly from the definition. However in this case, the second inequality is also tight, since a specialist is not a free rider. If $d_i=0$, then clearly the agent's effort in any equilibrium $x$ is $e^*$ and $\scn{i}{x}=e^*.$
\end{enumerate}
\end{proof}

\section*{Acknowledgments}
The research of the authors was supported by a grant by the Science and Engineering Research Board, Department of Science and Technology, Government of India.

\bibliographystyle{plainini}
\bibliography{ref}
\end{document}